\documentclass[lettersize,journal]{IEEEtran}
\IEEEoverridecommandlockouts
\usepackage{amsmath,amssymb,amsfonts}
\usepackage{cite}
\usepackage{graphicx}
\usepackage{epstopdf}
\usepackage{amssymb}
\usepackage{textcomp}
\usepackage{times}
\usepackage{subfigure}
\usepackage{amssymb,amsmath}
\usepackage{acronym}  
\usepackage{balance}
\usepackage{lettrine}
\usepackage{bm}
\usepackage{algorithm}
\usepackage{algorithmic}
\usepackage{mathrsfs}
\usepackage{lipsum}
\usepackage{stfloats}
\usepackage{placeins}
\usepackage{tabularx}
\usepackage{booktabs}
\usepackage{multirow}
\usepackage{url}
\usepackage{graphicx}

\allowdisplaybreaks[4]
\renewcommand{\IEEEQED}{\IEEEQEDopen} 

\usepackage{bm}

\DeclareMathAlphabet{\mathsfbr}{OT1}{cmss}{m}{n}
\SetMathAlphabet{\mathsfbr}{bold}{OT1}{cmss}{bx}{n}
\DeclareRobustCommand{\msf}[1]{%
  \ifcat\noexpand#1\relax\msfgreek{#1}\else\mathsfbr{#1}\fi
}

\makeatletter
\newcommand{\msfgreek}[1]{\csname s\expandafter\@gobble\string#1\endcsname}
\makeatother

\DeclareFontEncoding{LGR}{}{} 
\DeclareSymbolFont{sfgreek}{LGR}{cmss}{m}{n}
\SetSymbolFont{sfgreek}{bold}{LGR}{cmss}{bx}{n}
\DeclareMathSymbol{\salpha}{\mathord}{sfgreek}{`a}
\DeclareMathSymbol{\sbeta}{\mathord}{sfgreek}{`b}
\DeclareMathSymbol{\sgamma}{\mathord}{sfgreek}{`g}
\DeclareMathSymbol{\sdelta}{\mathord}{sfgreek}{`d}
\DeclareMathSymbol{\sepsilon}{\mathord}{sfgreek}{`e}
\DeclareMathSymbol{\szeta}{\mathord}{sfgreek}{`z}
\DeclareMathSymbol{\seta}{\mathord}{sfgreek}{`h}
\DeclareMathSymbol{\stheta}{\mathord}{sfgreek}{`j}
\DeclareMathSymbol{\siota}{\mathord}{sfgreek}{`i}
\DeclareMathSymbol{\skappa}{\mathord}{sfgreek}{`k}
\DeclareMathSymbol{\slambda}{\mathord}{sfgreek}{`l}
\DeclareMathSymbol{\smu}{\mathord}{sfgreek}{`m}
\DeclareMathSymbol{\snu}{\mathord}{sfgreek}{`n}
\DeclareMathSymbol{\sxi}{\mathord}{sfgreek}{`x}
\DeclareMathSymbol{\somicron}{\mathord}{sfgreek}{`o}
\DeclareMathSymbol{\spi}{\mathord}{sfgreek}{`p}
\DeclareMathSymbol{\srho}{\mathord}{sfgreek}{`r}
\DeclareMathSymbol{\ssigma}{\mathord}{sfgreek}{`s}
\DeclareMathSymbol{\stau}{\mathord}{sfgreek}{`t}
\DeclareMathSymbol{\supsilon}{\mathord}{sfgreek}{`u}
\DeclareMathSymbol{\sphi}{\mathord}{sfgreek}{`f}
\DeclareMathSymbol{\schi}{\mathord}{sfgreek}{`q}
\DeclareMathSymbol{\spsi}{\mathord}{sfgreek}{`y}
\DeclareMathSymbol{\somega}{\mathord}{sfgreek}{`w}

\DeclareMathSymbol{\svarsigma}{\mathord}{sfgreek}{`c}

\DeclareMathSymbol{\sGamma}{\mathalpha}{sfgreek}{`G}
\DeclareMathSymbol{\sDelta}{\mathalpha}{sfgreek}{`D}
\DeclareMathSymbol{\sTheta}{\mathalpha}{sfgreek}{`J}
\DeclareMathSymbol{\sLambda}{\mathalpha}{sfgreek}{`L}
\DeclareMathSymbol{\sXi}{\mathalpha}{sfgreek}{`X}
\DeclareMathSymbol{\sPi}{\mathalpha}{sfgreek}{`P}
\DeclareMathSymbol{\sSigma}{\mathalpha}{sfgreek}{`S}
\DeclareMathSymbol{\sUpsilon}{\mathalpha}{sfgreek}{`U}
\DeclareMathSymbol{\sPhi}{\mathalpha}{sfgreek}{`F}
\DeclareMathSymbol{\sPsi}{\mathalpha}{sfgreek}{`Y}
\DeclareMathSymbol{\sOmega}{\mathalpha}{sfgreek}{`W}

\DeclareRobustCommand{\mcal}[1]{%
  \ifcat\noexpand#1\relax\mathnormal{#1}\else\cal{#1}\fi
}
\DeclareRobustCommand{\BM}[1]{%
  \ifcat\noexpand#1\relax\bm{\boldUppercaseItalicGreek{#1}}\else\bm{#1}\fi
}
\makeatletter
\newcommand{\boldUppercaseItalicGreek}[1]{\csname var\expandafter\@gobble\string#1\endcsname}
\makeatother
\newcommand{\rv}[1]{\MakeLowercase{\msf{#1}}}  

\newcommand{\RM}[1]{\bm{\MakeUppercase{\msf{#1}}}}

\DeclareRobustCommand{\mcal}[1]{%
	\ifcat\noexpand#1\relax\mathnormal{#1}\else\cal{#1}\fi
}
\DeclareRobustCommand{\BM}[1]{%
	\ifcat\noexpand#1\relax\bm{\boldUppercaseItalicGreek{#1}}\else\bm{#1}\fi
}

\newtheorem{theorem}{\bf Theorem}

\newtheorem{definition}{\bf Definition}
\newtheorem{corollary}{\bf Corollary}
\newtheorem{remark}{\bf Remark}

\newtheorem{lemma}{\bf Lemma}

\acrodef{twr}[TWR]{two-way ranging}%
\acrodef{tof}[ToF]{time-of-flight}%
\acrodef{tdof}[TDoF]{time-difference-of-flight}%
\acrodef{gps}[GPS]{global positioning system}
\acrodef{stdoa}[S-TDOA]{sequential time-difference-of-arrival}%
\acrodef{tdoa}[TDOA]{time-difference-of-arrival}
\acrodef{rtt}[RTT]{round-trip time}%
\acrodef{edtwr}[NB-TWR]{network-based TWR}%
\acrodef{rtls}[RTLS]{real time location system}%
\acrodef{pr}[PR]{passive ranging}%
\acrodef{per}[PER]{passive extended ranging}%
\acrodef{nbper}[NB-PR]{network-based passive ranging}%
\acrodef{ml}[ML]{maximum likelihood}%
\acrodef{mle}[MLE]{maximum likelihood estimation}%
\acrodef{rmse}[RMSE]{root mean square error}%
\acrodef{smwr}[SMWR]{symmetric multi-way ranging}
\acrodef{smnr}[SM-NR]{signal-multiplexing network ranging}%
\acrodef{ptd}[PTD]{propagation time delay}%
\acrodef{rtls}[RTLS]{real-time location system}%
\acrodef{snr}[SNR]{signal-to-noise ratio}%
\acrodef{rtls}[RTLS]{real-time location system}%
\acrodef{wsn}[WSN]{wireless sensor network}%
\acrodef{iot}[IoT]{Internet-of-Things}%
\acrodef{mds}[MDS]{multidimensional scaling}%
\acrodef{pdf}[PDF]{probability density function}%
\acrodef{cdf}[CDF]{cumulative distribution function}%
\acrodef{los}[LOS]{line-of-sight}%
\acrodef{nlos}[NLOS]{non-line-of-sight}%

\acrodef{tx}[TX]{transmit}
\acrodef{rx}[RX]{receive}

\newcommand{\ToF}{\operatorname{ToF}}

\newcommand{\supp}{\operatorname{supp}}

\newcommand{\argmin}{\operatorname*{arg\,min}}
\newcommand{\argmax}{\operatorname*{arg\,max}}

\def\BibTeX{{\rm B\kern-.05em{\sc i\kern-.025em b}\kern-.08em
		T\kern-.1667em\lower.7ex\hbox{E}\kern-.125emX}}

\begin{document}
\title{Concurrent Coded Signal-Multiplexing Ranging \\
	for Half-Duplex Asynchronous Networks}
\author{
Zijian~Zhang,~\IEEEmembership{Graduate Student Member,~IEEE},
and Yuan~Shen,~\IEEEmembership{Senior Member,~IEEE}
\vspace*{-1em}
\\
    \thanks{
         Z. Zhang and Y. Shen are with the Department of Electronic Engineering, and Beijing National Research Center for Information Science and Technology, Tsinghua University, Beijing 100084, China (e-mail: {zhangzij15@tsinghua.org.cn; shenyuan\_ee@tsinghua.edu.cn}). {\it (Corresponding author: Yuan Shen.)}
       }
}

\maketitle
\markboth{}{ZHANG AND SHEN: CONCURRENT CODED SIGNAL-MULTIPLEXING RANGING}
\begin{abstract}
Signal-multiplexing network ranging (SM-NR) shares broadcasts across node pairs, but its sequential operation leads to a ranging cycle that grows linearly with network size. This paper proposes a concurrent coded SM-NR (CC-SM-NR) framework for asynchronous half-duplex networks. 
Firstly, the CC-SM-NR protocol coordinates concurrent transmissions through binary transmit--listen codewords. The transmit--listen schedule defined by these codewords ensures reciprocal observations subject to a finite concurrency limit. Then, we derive the exact minimum number of transmit--listen rounds without a concurrency limit, which reveals that the minimum grows logarithmically with network size. To account for practical scenarios, we establish the necessary and sufficient conditions for the constant-weight feasibility of codewords under a finite concurrency limit. Subsequently, we propose a low-complexity scheduling algorithm that achieves the minimum round count within the constant-weight codeword class. To support higher observation redundancy, this scheduling design is extended through a greedy construction. Finally, simulation results demonstrate the effectiveness of the proposed schemes for network ranging.
\end{abstract}
\begin{IEEEkeywords}
Network ranging, clock synchronization, half-duplex scheduling, constant-weight codes, concurrent transmissions.
\end{IEEEkeywords}
	
\acresetall	
\section{Introduction}\label{sec:intro}
Network localization relies on inter-node distance measurements to determine the relative geometry of a network \cite{WinSheDai:J18,zhang2022smnr}. Recent studies have developed scalable inference methods and distributed estimation methods \cite{teague2022scalable,li2022convergence,Li2026ScalableCooperative}, which achieve considerable localization performance in large cooperative networks \cite{xiong2022massive}. The resulting accuracy heavily depends on the quality of range measurements \cite{SheWymWin:J10,gomezvega2024deployment}. Hence, acquiring accurate and timely range measurements is an important prerequisite for network localization.

Range acquisition becomes challenging when nodes operate with independent clocks. The clock offsets and frequency deviations place the recorded timestamps on different time scales. When the signal-exchange interval is much longer than the time of flight (ToF), even a small frequency mismatch can produce a substantial ranging error \cite{neirynck2016alternative,Shalaby2024Timing}. The clock-scale compensation is thus essential for reliable ranging and the resulting localization. Another important consideration is the time required to acquire a complete set of pairwise distances. As the network grows, the sequential signal exchange may lengthen the ranging cycle. Within a given time budget, the number of complete measurements is therefore limited, which finally limits the localization accuracy.

Fortunately, concurrent transmissions can shorten the range acquisition time, provided that the node interactions can support valid range measurements. In a half-duplex network, nodes transmitting in the same round cannot receive one another's signals. Their reciprocal observations should therefore be obtained in other rounds. The available orthogonal frequency or code resources, together with the receiver capability, also limit the number of simultaneous transmitters \cite{russell2015neighbor,ellison2022multinode}. These constraints determine that the schedule of transmit and listen states is crucial to efficient concurrent network ranging.

\subsection{Related Work}
In range estimation, the clock synchronization and propagation-delay estimation are closely coupled, with their identifiability depending on the assumptions on message delays \cite{FrerisGrahamKumar2011Limits}. The literature has revealed that two-way timing can support joint clock-offset and skew estimation \cite{NohChaudhariSerpedinSuter2007}, as well as joint ranging and clock-parameter estimation \cite{DwiAngZacHan:J15}. The double-sided ranging reduces the effect of clock-frequency deviations \cite{neirynck2016alternative}, while asymmetrical time-stamping and passive listening exploit shared observations for joint synchronization and ranging \cite{CheRajLeuvan:J13}. The joint position and clock estimation has also been studied through mobile-network models \cite{rajan2015joint} and clock-rigidity analysis \cite{WenSchoofChapman2023Clock}. More recent work addresses clock tracking from ordinary network traffic \cite{WangLuPengLi2021Tracking}, phase and frequency synchronization \cite{ZinoDaboraPoor2025HD}, and ranging for dynamic nodes \cite{SunWangShenHuang2025TRT}. Recent work also addresses robust clock-parameter estimation under unknown packet-delay distributions and delay asymmetry \cite{Wang2026CompetitiveClock}. Regardless of the range estimator, when each pairwise range is acquired through node-to-node signaling, the number of signals grows quadratically with network size.

The broadcast-based methods reduce this overhead by allowing multiple measurement relationships to share one transmission. The pairwise broadcast synchronization reuses overheard timing exchanges \cite{NohSerQar:J08}, and recent multi-link overhearing methods further exploit this principle for clock-parameter estimation \cite{LiuWang2025Overhearing}. For ranging, signal-multiplexing network ranging (SM-NR) acquires all pairwise ranges in an $N$-node network using $N+1$ sequential rounds \cite{zhang2022smnr}. Its first and last reference broadcasts provide a common interval for clock-scale compensation. This mechanism has also been implemented in a multi-robot localization system \cite{zhao2025stereo}. Other broadcast protocols reuse timestamps across successive frames in densely meshed networks \cite{gentner2023dense} or support swarm ranging in dynamic networks \cite{shan2022swarm}. A recent swarm-ranging protocol maximizes the number of distance calculations, which are obtained from a set of signaling events \cite{hou2026optimal}. The signal-multiplexing network measuring further addresses the timeliness of asynchronous vehicle localization \cite{zhao2024timeliness}. These methods demonstrate how shared transmissions and timestamps improve network measurement efficiency. However, due to their sequential single-node transmission, the complete ranging cycle is still proportional to network size.

Fortunately, concurrent reception offers an additional means of reducing the occupied rounds. 
By utilizing orthogonal reference signal resources, multiple arrival times can be extracted from overlapping responses \cite{corbalan2020concurrent,schuh2024complex}. In distributed arrays, the distinguishable time--frequency waveforms have also been used for simultaneous inter-node ranging \cite{ellison2022multinode}. Cooperative localization has combined multi-node ranging with physical clock synchronization using code-division resources \cite{gu2024cooperative}. In multicarrier sensing, spreading mitigates cross-user ranging interference caused by target delays beyond the cyclic prefix \cite{Said2026Spreading}. These physical-layer and synchronization mechanisms support simultaneous signal acquisition, whereas a half-duplex network still requires transmit--listen arrangements that provide observations in both directions. Binary transmit--receive patterns have been investigated for energy-efficient link assessment \cite{keshavarzian2004link}, and neighbor discovery with finite multipacket reception already includes half-duplex operation and network-wide reception coverage \cite{russell2015neighbor}. In practical systems, the ranging problem additionally requires clock-compensated reciprocal measurements, a pre-set number of observations in each direction between non-reference nodes, and a ranging cycle whose length can be minimized under the concurrency limit.

Measurement scheduling and resource allocation provide a complementary solution. In existing works, the joint power and bandwidth allocation has been studied for cooperative localization with asynchronous request--reply broadcasts \cite{ZhaMolSheZhaFenWin:J16}. The navigation scheduling has been analyzed \cite{WanSheConWin:J17} and addressed using distributed algorithms \cite{wang2019distributed}. The learning-based scheduling can reduce the measurement count \cite{Peng2019Scheduling} and select relative measurements under communication constraints \cite{zhu2022scheduling}. The joint sensor and communication-link scheduling further reduces communication cost subject to localization requirements \cite{Wang2025LinkScheduling}. The communication-aware vehicle selection also accounts for interference and localization geometry \cite{Li2025VehicleSelection}. The duty-cycled ranging protocols in \cite{vecchia2025sonar} further demonstrate the tradeoff among update rate, discovery latency, and power consumption. These studies aim to efficiently exploit measurement resources. Differently, our scheduling objective is to obtain complete reciprocal coverage within a short cycle while controlling the transmission count of network nodes. 

Consequently, an important problem is to determine {\it how many rounds are required for this task} and {\it how to construct the corresponding schedules efficiently}. This problem also involves an accuracy--time tradeoff. For two-way ranging, response-delay optimization balances individual measurement variance against the information acquired per unit time \cite{Shalaby2024Timing}. Additional observations can provide more measurements but increase the transmission burden, which may lengthen the cycle. Conversely, packing more transmissions into each round reduces the occupied time but can increase timestamp measurement error \cite{schuh2024complex}. The number of rounds, observation redundancy, and measurement accuracy should therefore be evaluated together under the intended time budget.

\subsection{Our Contributions}
To accelerate network ranging, this paper proposes the concurrent coded SM-NR (CC-SM-NR) framework for asynchronous half-duplex networks with resolvable concurrent signals. The basic idea is to assign binary codewords that determine the nodes' transmit and listen states across rounds. With carefully designed codewords, a transmission-balanced schedule with a minimum number of transmit--listen rounds can be obtained. The main contributions are summarized as follows.

\begin{itemize}
\item We establish the CC-SM-NR framework and its signal-interaction protocol. Under half-duplex operation, transmit--listen rounds support concurrent transmissions by multiple nodes while preserving bidirectional transmit--listen opportunities. The protocol integrates the timestamp observations with clock-scale compensation and ToF estimation. It also supports a finite concurrency limit and a prescribed directional observation requirement.
\item We derive the minimum number of transmit--listen rounds. Without a concurrency limit, we obtain the exact minimum and show its logarithmic growth with network size. Under a finite concurrency limit, we establish necessary and sufficient feasibility conditions for constant-weight codeword schedules, which determines the minimum round count. These analyses can quantify how concurrent transmissions influence the ranging-cycle duration.
\item We develop a low-complexity scheduling algorithm that combines parameter search with load-balancing exchanges to attain the minimum transmit--listen round count within the constant-weight codeword class. To improve the robustness, we extend this design through a greedy construction to support higher observation redundancy.
\item We conduct numerical simulations to evaluate the ranging and relative-localization performance. The results show that the scheduling algorithm and its extension can efficiently construct feasible schedules. Compared to the benchmark schemes, thanks to the shorter ranging cycles of the CC-SM-NR scheme, the proposed schemes can improve the accuracy efficiently.
\end{itemize}


\subsection{Organization and Notation}
The remainder of this paper is organized as follows. In Section~\ref{sec:sys}, we define the network-ranging problem and the clock and timestamp models. In Section~\ref{sec:protocol}, we present the CC-SM-NR protocol, clock-scale compensation, and range estimation. In Section~\ref{sec:rounds}, the minimum number of transmit--listen rounds is analyzed. In Section~\ref{sec:schedule_construction}, the scheduling algorithm and its extension are developed. In Section~\ref{sec:nsr}, we evaluate construction time, ranging efficiency, and ranging and relative-localization accuracy. In Section~\ref{sec:con}, conclusions are drawn.

\textit{Notation:} Scalars, vectors, and matrices are denoted by italic $x$, bold lowercase $\bf x$, and bold uppercase $\bf X$ symbols, respectively. Sans-serif symbols $\RM X$ denote random quantities, and a hat denotes an estimate $\hat {\RM X}$. The sets of real numbers and positive integers are $\mathbb R$ and $\mathbb N$, respectively, and $[L]=\{1,\ldots,L\}$. Superscript $(\cdot)^{\rm T}$ denotes transpose, $\|\cdot\|$ the Euclidean norm, and $\supp(\bf Z)$ the set of indices of nonzero entries in $\bf Z$. For sets and families, $|\mathcal A|$, $\mathcal A/\mathcal B$, and $\mathcal A\subseteq\mathcal B$ denote cardinality, set difference, and inclusion. The operators $d_H(\cdot,\cdot)$, $\lfloor\cdot\rfloor$, and $\lceil\cdot\rceil$ denote Hamming distance, floor, and ceiling, respectively, and $\binom Lw$ is a binomial coefficient.

\section{Preliminaries}\label{sec:sys}
In this section, the system model and the ranging task for asynchronous networks are introduced.
\subsection{Nodes and Range Parameters}
Consider an asynchronous network consisting of $N\geq3$ half-duplex nodes, which are indexed by
\begin{equation}
	\mathcal N \triangleq \{1,2,\ldots,N\}.
\end{equation}
The two-dimensional position of node $n\in\mathcal N$ is denoted by
$\bm p_n=[x_n,y_n]^{\rm T}\in\mathbb R^2$. 
Assume that the scheduled network is fully connected with line-of-sight (LOS) propagation \cite{zhang2022smnr,zhao2024timeliness}. For any two nodes
$n_1,n_2\in\mathcal N$, their pairwise range and the corresponding time of flight (ToF) are
\begin{equation}
	d_{n_1,n_2}
	=
	\|
	\bm p_{n_1}-\bm p_{n_2}
	\| {~~\text{and}~~} \operatorname{ToF}(n_1,n_2)
	=
	\frac{d_{n_1,n_2}}{c},
\end{equation}
where $c$ denotes the propagation speed of the ranging signal.

Network ranging aims to estimate the
$\binom{N}{2}$ pairwise ranges
$d_{n_1,n_2}$ for all $n_1,n_2\in{\cal N}$ from the timestamps collected during a ranging cycle. Since the nodes operate with independent local clocks, these timestamps are generally measured on different time scales and with different offsets. Each node operates in half-duplex mode and hence cannot receive while transmitting. Therefore, a network-ranging schedule should provide reciprocal transmit--listen opportunities while keeping the time required to cover all node pairs short.
In  Fig.~\ref{fig:network_scene}, we show a one-round snapshot of the existing SM-NR and the proposed CC-SM-NR. In CC-SM-NR, multiple nodes can transmit concurrently, while the remaining nodes listen.

\begin{figure}[!t]
    \centering
    \includegraphics[width=\columnwidth]{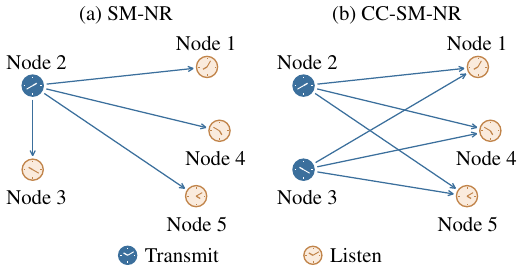}
    \caption{A one-round snapshot of a half-duplex network: (a) SM-NR with one transmitter and (b) CC-SM-NR with two concurrent transmitters.}
    \label{fig:network_scene}
\end{figure}

\subsection{Clock and Timestamp Model}
To characterize the asynchronous clocks of nodes, we introduce the well-known timestamp model \cite{zhang2022smnr,CheRajLeuvan:J13,neirynck2016alternative,zhao2024timeliness,WenSchoofChapman2023Clock}. Let $t_\alpha^{(n)}$ denote the true time at which node $n$ transmits or receives a signal in a ranging event $\alpha$. The timestamp observed at node $n$ is
modeled as
\begin{equation}
	{\rv \gamma}_\alpha^{(n)}
	={\rv{k}}^{(n)}t_\alpha^{(n)}+\theta^{(n)}+{\rv{w}}_\alpha^{(n)},
	\label{eq:clock_model}
\end{equation}
where $\theta^{(n)}$ is the clock offset, ${\rv{k}}^{(n)}=1+\rv{e}^{(n)}$ is the clock skew, $\rv{e}^{(n)}$ is the clock-frequency deviation, and
$\rv{w}_\alpha^{(n)}$ is the timestamp measurement error. Following \cite{zhang2022smnr,CheRajLeuvan:J13,WenSchoofChapman2023Clock,neirynck2016alternative,zhao2024timeliness}, we temporarily neglect the measurement errors $\rv{w}_\alpha^{(n)}$ in the algorithmic derivations and evaluate their effect through numerical simulations.

Following the clock model defined in the IEEE 802.15.4a standard \cite{IEEEstd2}, the clock-frequency deviation satisfies ${\rv{e}}^{(n)}\in[-e_{\max},e_{\max}]$. Since the cycle is short relative to the timescale of oscillator variation, the clock skews are assumed constant within each ranging cycle. For any two events $\mu$ and $\nu$ observed by node $n$, the true time interval and the observed timestamp interval are respectively defined  as
\begin{align}
	{D}_{\mu,\nu}^{(n)}
	\triangleq
	{t}_{\nu}^{(n)}-{ t}_{\mu}^{(n)}
	~~{\text{and}}~~
	\msf{D}_{\mu,\nu}^{(n)}
	\triangleq
	{\rv \gamma}_{\nu}^{(n)}-{\rv \gamma}_{\mu}^{(n)}.
	\label{eq:D_definition}
\end{align}
Taking timestamp differences removes the $N$ clock offsets $\theta^{(n)}$ in the $N$-node system, while the clock skew ${\rv{k}}^{(n)}$ should be carefully addressed in range estimation. In fact, accurate clock-scale compensation is essential, because the intervals between events can be much longer than the ToF. In uncompensated ranging, even a small relative frequency deviation produces a ToF estimation error proportional to the time interval among ranging events \cite{neirynck2016alternative,zhang2022smnr,WinSheDai:J18,zhao2024timeliness}.

\section{Concurrent Coded SM-NR Framework}\label{sec:protocol}
The proposed CC-SM-NR framework is presented in this
section. Its ranging protocol, synchronization interval, transmit--listen codewords, and range estimator are illustrated in turn. In Section~\ref{subsec:ranging_protocol}, the signal-interaction protocol is described. In Section~\ref{subsec:synchronization_interval}, the clock-scale compensation is explained. In Section~\ref{subsec:transmit_listen_codewords}, we define the codewords and bidirectional observation opportunities for network ranging. In Section~\ref{subsec:range_estimation}, the range estimator is derived from timestamps.

\subsection{Ranging Protocol}
\label{subsec:ranging_protocol}

The existing SM-NR scheme estimates the full-network range information through $N+1$ sequential node-by-node rounds \cite{zhang2022smnr}. Since its cycle duration increases linearly with $N$, the update frequency of
localization information is limited. As a solution, the proposed CC-SM-NR framework reduces the ranging-cycle duration by
allowing several nodes to transmit in a single round, while assigning their transmit and listen states to achieve reciprocal observations. The protocol of CC-SM-NR consists of three stages:
\begin{enumerate}
	\item \emph{Initial reference stage:} Node 1, i.e.,  the reference node, broadcasts a ranging
	signal and all other nodes listen.
	\item \emph{Concurrent coded stage:} The non-reference nodes from ${\cal N}/\{1\}$ execute
	$R$ {\it transmit--listen rounds} according to the binary transmit--listen codewords. The transmit-state nodes  broadcast signals, while listen-state nodes receive.
	\item \emph{Final reference stage:} Node 1 broadcasts the reference
	ranging signal again and all other nodes listen.
\end{enumerate}
The complete ranging cycle therefore occupies $N_{\rm round}=R+2$ rounds in total, which can be smaller than the $N+1$ rounds required by SM-NR when the network size is large. For example, Fig.~\ref{fig:protocol_schedule} illustrates a 7-node protocol, in which the CC-SM-NR framework preserves reciprocal timestamp observations while reducing the complete ranging cycle.

\begin{figure}[!t]
	\centering
	\includegraphics[width=\columnwidth]{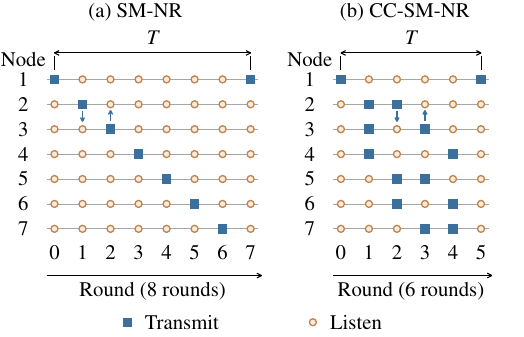}
	\caption{The protocols for a 7-node network: (a) SM-NR occupies $N+1=8$ rounds and (b) CC-SM-NR occupies $R+2=6$ rounds. The vertical arrows highlight reciprocal observations between nodes 2 and 3.}
	\label{fig:protocol_schedule}
\end{figure}

We define the set of simultaneous transmitters in transmit--listen round $r$ as
\begin{equation}
	\mathcal S_r\triangleq
	\{n\in\mathcal N/\{1\}:\chi_{n,r}=1\},\quad \forall r\in[R].
\end{equation}
To describe protocol operation above the physical layer, we introduce the following definition. 
\begin{definition}[Resolvable concurrent signals]
	In round $r$, every listening node can identify each transmitter
	$n\in\mathcal S_r$ and produce a timestamp described by
	\eqref{eq:clock_model}. Waveform-dependent estimation uncertainty, including errors associated with code-division or frequency-division multiplexing, is included in the timestamp measurement error ${\rv{w}}_\alpha^{(n)}$. 
\end{definition}
\begin{remark}
With orthogonal resource allocation, the resolvable concurrent signals are feasible in practice \cite{corbalan2020concurrent,schuh2024complex,ellison2022multinode,gu2024cooperative,Said2026Spreading}. For example, experiments have demonstrated multiple response arrival times extracted from overlapping signals using response-dependent time shifts \cite{corbalan2020concurrent}. It is proved that, the ranging error scales as $\sqrt{K_r}$ for time-division or code-division multi-user systems, where $K_r=|\mathcal S_r|$. Instead, in frequency-division case, ranging error has a $K_r$-scaling law \cite{SheWin:J10a}. Our simulations in Section \ref{sec:nsr} focus on the former case. 
\end{remark}

To take the ability to resolve concurrent signals, as well as the additional timestamp measurement error induced by concurrent transmissions, into account, we introduce a prescribed concurrency limit and timestamp-error model in later sections.

\subsection{Synchronization Interval}
\label{subsec:synchronization_interval}
In the SM-NR protocol, as a necessary condition, node 1 broadcasts reference signals at the beginning and end of a ranging cycle to provide a common interval for clock-scale compensation. For a fixed cycle duration, placing the
reference broadcasts at the beginning and end provides the longest common observation interval. It helps to minimize the contribution of timestamp measurement error to clock-scale compensation within the cycle \cite{zhang2022smnr}. To maintain the advantage, the proposed CC-SM-NR retains this design. Let $T$ be
the true interval between the two transmissions of node 1, referred to as the synchronization interval. Because the
propagation delay is the same for both transmissions, the interval $T$ is observed at every node $n$. Thus, $T$ measured by node $n$ is 
\begin{equation}
	\msf{T}^{(n)}={\rv{k}}^{(n)}T.
	\label{eq:T_observed}
\end{equation}
Then, given the distribution
$h_{{\rv{e}}^{(n)}}(\cdot)$ of the clock-frequency deviation ${\rv{e}}^{(n)}$,
the synchronization interval can be estimated by maximum likelihood (ML):
\begin{equation}
	\widehat{\msf{T}}
	=\argmax_T\prod_{n\in\mathcal N}
	h_{{\rv{e}}^{(n)}}\!\left(\msf{T}^{(n)};T\right),
	\label{eq:T_ML_generic}
\end{equation}
where the density is expressed in likelihood form. For
the uniform distribution ${\rv{e}}^{(n)}\sim\mathcal U[-e_{\max},e_{\max}]$, the ML-based estimator of $T$ is derived in \cite{zhang2022smnr}, written as
\begin{equation}
	\widehat{\msf{T}} = 
	\frac{\max_{n\in\mathcal N}\msf{T}^{(n)} }{1+e_{\max}}.
	\label{eq:T_uniform}
\end{equation} 
The estimate $\widehat{\msf T}$ is used to compensate for the node-dependent clock scales in range estimations.
\begin{remark}
	The synchronization interval  $\widehat{\msf{T}}$ is reused in the clock-scale compensation of all range estimates, and its error actually dominates in the factors causing ranging errors \cite[Sec.~IV-C]{zhang2022smnr}. For the proposed CC-SM-NR, another advantage of allowing only one node to transmit at each reference event is that it can obtain a sufficiently precise estimate $\widehat{\msf{T}}$, which is not influenced by the additional timestamp measurement errors induced by concurrent transmissions. As a result, concurrent transmissions are confined to the $R$ transmit--listen rounds to protect the performance of clock-scale compensation.
\end{remark}

\subsection{Transmit--Listen Codewords}
\label{subsec:transmit_listen_codewords}
For each node $n\in\mathcal N$, we define a codeword that specifies its transmit or listen state in the concurrent coded stage:
\begin{equation}
	\bm\chi_n=[\chi_{n,1},\ldots,\chi_{n,R}]^{\rm T}\in\{0,1\}^{R},
\end{equation}
where $\chi_{n,r}=1$ indicates that node $n$ transmits and $\chi_{n,r}=0$ indicates that it listens in round $r$. In particular, we fix ${\bm\chi}_1=[0,\cdots,0]^{\rm T}$, ${\chi}_{1,0}={\chi}_{1,R+1}=1$ and ${\chi}_{n,0}={\chi}_{n,R+1}=0$ for $n\ne1$ to describe the signal-interaction events between node 1 and other nodes. Collecting the codewords of the $N-1$ non-reference nodes gives the schedule matrix
\begin{equation}
	\bm X=[\bm\chi_2,\cdots,\bm\chi_N]
	\in\{0,1\}^{R\times {(N-1)}},
\end{equation}
whose rows represent transmit--listen rounds and columns represent non-reference nodes. For the node pair $(n_1,n_2)$, the number of directional transmit--listen
opportunities from $n_1$ to $n_2$ is defined as
\begin{equation}
	\Gamma_{n_1\rightarrow n_2}(\bm X)
	\triangleq
	\sum_{r=1}^{R}\chi_{n_1,r}(1-\chi_{n_2,r}).
	\label{eq:gamma_direction}
\end{equation}
$\Gamma_{n_1\rightarrow n_2}(\bm X)$ counts the rounds in which node $n_1$ transmits and $n_2$ listens among the $R$ transmit--listen rounds. To enable ranging for every node pair, the schedule $\bf X$ should satisfy the following {\it bidirectional feasibility condition}.
\begin{definition}[Bidirectional feasibility condition]
	A schedule $\bm X$ is bidirectionally feasible if
	\begin{equation}
		\Gamma_{n_1 \rightarrow n_2}(\bm X)\geq 1,
		\quad \forall n_1\neq n_2,
		\quad n_1,n_2\in\mathcal N/\{1\}.
		\label{eq:bi_feasible}
	\end{equation}
\end{definition}
This condition excludes node 1 because the two reference broadcasts transmitted by node 1 provide observations from node 1 to any node $n$, and each non-reference node provides an outgoing observation to node 1. 
Thus, condition \eqref{eq:bi_feasible} can provide bidirectional coverage of every node pair.

\begin{table}[t]
	\centering
	\caption{A Transmit--Listen Schedule $\bm X^{\rm T}$ for $N=10$ and $R=5$.}
	\label{tab:schedule_example}
	\begin{tabular}{c*{5}{c}}
		\toprule
		Node & $r=1$ & $r=2$ & $r=3$ & $r=4$ & $r=5$\\
		\midrule
		2  & 1&1&0&0&0\\
		3  & 1&0&1&0&0\\
		4  & 1&0&0&1&0\\
		5  & 1&0&0&0&1\\
		6  & 0&1&1&0&0\\
		7  & 0&1&0&1&0\\
		8  & 0&1&0&0&1\\
		9  & 0&0&1&1&0\\
		10 & 0&0&1&0&1\\
		\midrule
		\shortstack{Concurrent\\transmissions} &4&4&4&3&3\\
		\bottomrule
	\end{tabular}
\end{table}

\begin{remark}
	Table~\ref{tab:schedule_example} displays the schedule matrix ${\bm X}^{\rm T}$ for a bidirectionally feasible schedule with five transmit--listen rounds with $N=10$. The rows specify the codewords of the nine non-reference nodes. In each column, at most four nodes transmit simultaneously. For example, nodes 2 and 3 have codewords $\bm\chi_2=[1,1,0,0,0]^{\rm T}$ and $\bm\chi_3=[1,0,1,0,0]^{\rm T}$. Round $r=2$ provides a node-$2$-to-node-$3$ observation, whereas round $r=3$ provides a
	node-$3$-to-node-$2$ observation. Including the two reference broadcasts, the complete cycle occupies $R + 2 = 7$ rounds, compared with the $N+1=11$ rounds required by the SM-NR scheme.
\end{remark}

\subsection{Range Estimation}
\label{subsec:range_estimation}

Consider two nodes $n_1,n_2\in\mathcal N$. Under the bidirectional feasibility condition in (\ref{eq:bi_feasible}), there exist two reciprocal events $\mu=(n_1,r_{n_1})$ and $\nu=(n_2,r_{n_2})$ with the following properties: 
\begin{enumerate}
	\item In event $\mu$, node $n_1$ transmits while node $n_2$
	listens during $\mu$, i.e., $\chi_{n_1,r_{n_1}}=1$ and $\chi_{n_2,r_{n_1}}=0$. 
	\item In event $\nu$, node $n_2$ transmits while node $n_1$ listens during $\nu$, i.e., $\chi_{n_2,r_{n_2}}=1$ and $\chi_{n_1,r_{n_2}}=0$. 
\end{enumerate}
Without loss of generality, assume that $\mu$ precedes $\nu$. Node $n_1$ timestamps its transmission in event $\mu$ and its subsequent reception from node $n_2$ in event $\nu$. Meanwhile, node $n_2$ timestamps the reception from node $n_1$, and node $n_2$ then timestamps its transmission in event $\nu$. The true event times therefore satisfy
\begin{align}
	t_{\mu}^{(n_2)} = t_{\mu}^{(n_1)} + \ToF(n_1,n_2),\\
	t_{\nu}^{(n_1)} = t_{\nu}^{(n_2)} + \ToF(n_1,n_2).
\end{align}
Using the time intervals in \eqref{eq:D_definition} gives
\begin{align}
	{\msf D}_{\mu,\nu}^{(n_1)}
	&={\rv \gamma}_{\nu}^{(n_1)}-{\rv \gamma}_{\mu}^{(n_1)}={\rv{k}}^{(n_1)}(t_\nu^{(n_1)}-t_\mu^{(n_1)}),
	\label{eq:D_n1}\\
	{\msf D}_{\mu,\nu}^{(n_2)}
	&={\rv \gamma}_{\nu}^{(n_2)}-{\rv \gamma}_{\mu}^{(n_2)}={\rv{k}}^{(n_2)}(t_\nu^{(n_2)}-t_\mu^{(n_2)}).
	\label{eq:D_n2}
\end{align}
Recalling that $\msf{T}^{(n)}={\rv{k}}^{(n)}T$, the true ToF is
\begin{equation}
	\ToF(n_1,n_2)
	=\frac{T}{2}
	\left(
	\frac{{\msf D}_{\mu,\nu}^{(n_1)}}{{\msf T}^{(n_1)}}
	-
	\frac{{\msf D}_{\mu,\nu}^{(n_2)}}{{\msf T}^{(n_2)}}
	\right).
	\label{eq:ToF_identity}
\end{equation}
Using the invariance property of ML estimation, the ML-based ToF estimator is
\begin{equation}
	\widehat\ToF(n_1,n_2)
	=\frac{\widehat{\msf{T}}}{2}
	\left(
	\frac{{\msf D}_{\mu,\nu}^{(n_1)}}{{\msf T}^{(n_1)}}
	-
	\frac{{\msf D}_{\mu,\nu}^{(n_2)}}{{\msf T}^{(n_2)}}
	\right).
	\label{eq:ToF_estimator}
\end{equation}
The corresponding range estimate is $\widehat d_{n_1,n_2}=c\,\widehat{\ToF}(n_1,n_2)$.

If multiple reciprocal events are available for measuring $d_{n_1,n_2}$, let $\mathcal P_{n_1,n_2}$ denote
the set of reciprocal event pairs, and let $\widehat{d}_{n_1,n_2}(\mu,\nu)$ denote the range estimate for event pair $(\mu,\nu)\in\mathcal P_{n_1,n_2}$. In this case, the combined range estimate can be written as
\begin{equation}
	\widehat{d}_{n_1,n_2}
	=\sum_{(\mu,\nu)\in\mathcal P_{n_1,n_2}}
	\beta_{\mu,\nu}^{(n_1,n_2)}
	\widehat{d}_{n_1,n_2}(\mu,\nu),
	\label{eq:weighted_ToF}
\end{equation}
where $\beta_{\mu,\nu}^{(n_1,n_2)}\geq0$ and the weights sum to one. For simplicity, an equal-weight setup for $\beta_{\mu,\nu}^{(n_1,n_2)}$ can be adopted in practice \cite{zhang2022smnr}.

\section{Analysis of Transmit--Listen Round Counts}\label{sec:rounds}
In this section, we derive the minimum number of transmit--listen rounds for the proposed CC-SM-NR framework by utilizing strongly separating theory \cite{bollobas2007separating}. In Section~\ref{subsec:unrestricted_rounds}, we give the minimum round count and its scaling without a concurrency limit. In Section~\ref{subsec:constrained_rounds}, the feasibility of constant-weight codewords is established, and the minimum round count under a finite concurrency limit is derived.

\subsection{Transmit--Listen Rounds Without Concurrency Limit}
\label{subsec:unrestricted_rounds}
Let ${\cal A}_n \triangleq \supp({\bm\chi_n})$ denote the set of nonzero positions in codeword ${\bm\chi_n}$. The subsets of transmit--listen rounds assigned to the $N-1$ non-reference nodes form the codeword family
\begin{equation}
	\mathcal F\triangleq
	\left\{ {\cal A}_n:n\in\mathcal N/\{1\} \right\},
\end{equation}
which corresponds to a schedule matrix $\bm X$. Using the following lemma, one can establish the relation between
the bidirectional feasibility and this family.
\begin{lemma}[Bidirectional feasibility]
	\label{lem:antichain}
	The transmit--listen schedule $\bm X$ satisfies the bidirectional feasibility condition among the
	non-reference nodes if and only if $
	|\mathcal F|= N-1
	$ and $\mathcal F$ is an antichain. Equivalently, for every pair of	distinct nodes $n_1,n_2\in\mathcal N/\{1\}$, we have
	$
\supp(\bm\chi_{n_1})\nsubseteq\supp(\bm\chi_{n_2})
$ and $
\supp(\bm\chi_{n_2})\nsubseteq\supp(\bm\chi_{n_1}).$
\end{lemma}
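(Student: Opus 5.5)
The plan is to rewrite the directional count $\Gamma_{n_1\rightarrow n_2}(\bm X)$ in \eqref{eq:gamma_direction} in set-theoretic terms and then read off both directions of the equivalence. Since $\chi_{n_1,r}(1-\chi_{n_2,r})=1$ exactly when $r\in\mathcal A_{n_1}$ and $r\notin\mathcal A_{n_2}$, I will first record the identity
\begin{equation*}
\Gamma_{n_1\rightarrow n_2}(\bm X)=\left|\mathcal A_{n_1}/\mathcal A_{n_2}\right|,
\end{equation*}
valid for all non-reference nodes $n_1,n_2$. Consequently $\Gamma_{n_1\rightarrow n_2}(\bm X)\geq 1$ holds if and only if $\mathcal A_{n_1}\nsubseteq\mathcal A_{n_2}$, i.e., $\supp(\bm\chi_{n_1})\nsubseteq\supp(\bm\chi_{n_2})$.

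For the \emph{only if} direction, I will assume \eqref{eq:bi_feasible} and apply the identity to both ordered pairs $(n_1,n_2)$ and $(n_2,n_1)$. This yields the two mutual non-inclusions $\mathcal A_{n_1}\nsubseteq\mathcal A_{n_2}$ and $\mathcal A_{n_2}\nsubseteq\mathcal A_{n_1}$ for every distinct pair. In particular $\mathcal A_{n_1}\neq\mathcal A_{n_2}$, so the map $n\mapsto\mathcal A_n$ is injective on $\mathcal N/\{1\}$. Injectivity gives $|\mathcal F|=N-1$, and pairwise incomparability of its members makes $\mathcal F$ an antichain.

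For the \emph{if} direction, I will assume $|\mathcal F|=N-1$ and that $\mathcal F$ is an antichain. Because $\mathcal F$ is the image of the $(N-1)$-element index set $\mathcal N/\{1\}$, the cardinality condition forces distinct nodes to receive distinct sets. The antichain property then makes any two such distinct sets incomparable, so both non-inclusions hold for every pair, and the identity gives $\Gamma\geq1$ in both directions. The stated equivalent formulation is just this pairwise non-inclusion, so it is covered along the way.

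The only point needing care is the role of $|\mathcal F|=N-1$. An antichain is a set of distinct subsets, so two nodes sharing a codeword would still leave $\mathcal F$ an antichain after the duplicate collapses, yet that pair would violate \eqref{eq:bi_feasible}. The cardinality condition is exactly what excludes such duplicates, and I will state this explicitly. Everything else is routine bookkeeping.
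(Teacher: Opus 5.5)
Your proposal is correct and follows the paper's proof essentially step for step. Both arguments rest on the identity $\Gamma_{n_1\rightarrow n_2}(\bm X)=|\mathcal A_{n_1}/\mathcal A_{n_2}|$ and read off pairwise incomparability in both directions. You also state explicitly that $|\mathcal F|=N-1$ is what makes $n\mapsto\mathcal A_n$ injective, so that duplicate codewords cannot collapse inside a set-valued antichain; the paper leaves this converse as following ``immediately.''
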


\begin{IEEEproof}
	Let
	$
	{\cal A}_{n_1}=\supp(\bm\chi_{n_1})$ and $
	{\cal A}_{n_2}=\supp(\bm\chi_{n_2})$.
	The number of the transmit--listen rounds in which node $n_1$ transmits while
	node $n_2$ listens is
	\begin{equation}
	\Gamma_{{n_1}\rightarrow {n_2}}
	=
	\sum_{r=1}^{R}\chi_{n_1,r}(1-\chi_{{n_2},r})
	=
	|{\cal A}_{n_1}/ {\cal A}_{n_2}|.
	\end{equation}
	Consequently,
	$
	\Gamma_{{n_1}\rightarrow {n_2}}\geq1$ implies that ${\cal A}_{n_1}/{\cal A}_{n_2}\neq\varnothing$, which leads to ${\cal A}_{n_1}\nsubseteq {\cal A}_{n_2}$.
	Similarly, we have
	$
	{\cal A}_{n_2}\nsubseteq {\cal A}_{n_1}
	$.
	Hence, bidirectional feasibility holds if and only if the non-reference nodes' codeword supports are pairwise incomparable. Besides, pairwise incomparability also avoids
	identical subsets, so the assigned subsets are distinct. One can obtain that $|\mathcal F|=N-1$ and $\mathcal F$ is an antichain. The converse follows immediately.
\end{IEEEproof}
Applying the strongly separating-system theory \cite[Theorem~2]{bollobas2007separating} to Lemma~\ref{lem:antichain}, we derive the minimum value of $R$ in the following theorem.
\begin{theorem}[Minimum value of $R$]
	\label{thm:sperner}
	For an $N$-node network, the minimum number of transmit--listen rounds is
	\begin{equation}
		R^\star_\infty
		=\min\left\{R\in\mathbb N:
		\binom{R}{\lfloor R/2\rfloor}\geq N-1\right\}.
		\label{eq:R_opt_inf}
	\end{equation}
\end{theorem}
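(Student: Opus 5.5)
By Lemma~\ref{lem:antichain}, a schedule $\bm X$ with $R$ transmit--listen rounds is bidirectionally feasible exactly when the $N-1$ non-reference nodes receive $N-1$ distinct, pairwise incomparable subsets of $[R]$. The question is therefore the smallest ground-set size $R$ that admits an antichain of size $N-1$ in the Boolean lattice $2^{[R]}$. I will prove the two directions separately and then observe that the threshold in \eqref{eq:R_opt_inf} is well defined and monotone.

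\emph{Achievability.} Let $R$ satisfy $\binom{R}{\lfloor R/2\rfloor}\geq N-1$. Choose any $N-1$ distinct $\lfloor R/2\rfloor$-subsets of $[R]$ and use their indicator vectors as the codewords $\bm\chi_2,\ldots,\bm\chi_N$. Two distinct sets of the same cardinality can never be nested, so the family is an antichain. By Lemma~\ref{lem:antichain}, $\Gamma_{n_1\rightarrow n_2}(\bm X)\geq1$ for all distinct non-reference nodes, and the schedule is feasible. Table~\ref{tab:schedule_example} is an instance of this construction with $R=5$ and weight $2$.

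\emph{Converse.} Suppose a feasible schedule exists with $R$ rounds. By Lemma~\ref{lem:antichain}, $\mathcal F$ is an antichain in $2^{[R]}$ with $|\mathcal F|=N-1$. Sperner's theorem then gives $N-1=|\mathcal F|\leq\binom{R}{\lfloor R/2\rfloor}$; this is the content of the strongly separating system result \cite[Theorem~2]{bollobas2007separating} specialized to our setting. If I prove it directly, I would use the LYM inequality
\begin{equation*}
\sum_{A\in\mathcal F}\binom{R}{|A|}^{-1}\leq 1 .
\end{equation*}
This follows by counting maximal chains: each maximal chain of $2^{[R]}$ meets the antichain in at most one set, and a set $A$ lies on exactly $|A|!\,(R-|A|)!$ of the $R!$ maximal chains. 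Since $\binom{R}{k}\leq\binom{R}{\lfloor R/2\rfloor}$ for every $k$, we obtain $|\mathcal F|\leq\binom{R}{\lfloor R/2\rfloor}$. Hence every feasible $R$ satisfies the inequality defining $R^\star_\infty$.

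\emph{Minimality and well-definedness.} The central binomial coefficient $\binom{R}{\lfloor R/2\rfloor}$ is nondecreasing in $R$ and unbounded, so the set in \eqref{eq:R_opt_inf} is a nonempty up-set of $\mathbb N$. Its minimum is therefore the smallest feasible $R$, by the converse and achievability above. The main obstacle is the converse, i.e.\ the Sperner/LYM bound; I would cite it, with the chain-counting argument as a self-contained backup. The remaining steps are routine. One could also remark that $R^\star_\infty=\log_2(N)+\tfrac12\log_2\log_2 N+O(1)$, from $\binom{R}{\lfloor R/2\rfloor}\sim 2^R/\sqrt{\pi R/2}$, which gives the logarithmic growth claimed in the abstract.
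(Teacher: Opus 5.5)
Your proposal is correct and follows essentially the same route as the paper: Lemma~\ref{lem:antichain} reduces feasibility to the existence of an antichain of size $N-1$ in $2^{[R]}$, Sperner's theorem gives the converse, and any $N-1$ distinct sets from the middle layer $\lfloor R/2\rfloor$ give achievability. Your self-contained LYM chain-counting proof of Sperner's bound and your well-definedness/monotonicity remark are correct additions that the paper omits.
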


\begin{IEEEproof}
	According to Lemma~\ref{lem:antichain}, the $N-1$ codeword supports must form an
	antichain in the Boolean lattice $2^{[R]}$. Sperner's theorem states that an
	antichain in $2^{[R]}$ contains at most
	$\binom{R}{\lfloor R/2\rfloor}$ members \cite{sperner1928satz}. Therefore, any
	feasible schedule must satisfy the inequality in \eqref{eq:R_opt_inf}.
	Conversely, every subset of $[R]$ having cardinality $\lfloor R/2\rfloor$
	belongs to the central layer of the Boolean lattice, and any two distinct
	members of this layer are incomparable. Selecting any $N-1$ such subsets yields
	a feasible constant-weight schedule. Hence, the lower bound of $R$ is achievable.
\end{IEEEproof}

\begin{corollary}[Scaling law of $R$]
	As $N\rightarrow\infty$, the unconstrained minimum of $R$ satisfies
	\begin{equation}
		R^\star_\infty
		= \log_2(N-1)
		+\frac{1}{2}\log_2\log_2(N-1)+{\cal O}(1).
		\label{eq:asymptotic_round}
	\end{equation}
\end{corollary}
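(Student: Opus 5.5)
We prove the corollary by inverting the central binomial coefficient using its standard Stirling estimate. Write $M\triangleq N-1$ and $B(R)\triangleq\binom{R}{\lfloor R/2\rfloor}$. By Theorem~\ref{thm:sperner}, $R^\star_\infty$ is the smallest $R$ with $B(R)\ge M$. Since $B(R)$ is nondecreasing in $R$, minimality gives the sandwich
\begin{equation*}
B(R^\star_\infty-1)<M\le B(R^\star_\infty).
\end{equation*}
The whole task is therefore to invert $B$ up to an additive $\mathcal O(1)$ error in $R$.

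First, we use the two-sided Stirling-type bound
\begin{equation*}
\frac{c_1\,2^R}{\sqrt R}\le B(R)\le \frac{c_2\,2^R}{\sqrt R},\qquad R\ge1,
\end{equation*}
with absolute constants $c_1,c_2>0$ (for example $c_1=1/2$ and $c_2=1$). Taking logarithms gives
\begin{equation*}
\log_2 B(R)=R-\tfrac12\log_2 R+\mathcal O(1).
\end{equation*}
Define $f(R)\triangleq R-\tfrac12\log_2R$, which is increasing for $R\ge1$. The sandwich then yields
\begin{equation*}
f(R^\star_\infty)=\log_2M+\mathcal O(1).
\end{equation*}
Here we also use $f(R^\star_\infty)-f(R^\star_\infty-1)=\mathcal O(1)$, so the lower side of the sandwich transfers as well.

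Next, we bootstrap to invert $f$. A crude bound comes first: since $B(R)\le 2^R$, we have $R^\star_\infty\ge\log_2M$. Since $B(R)\ge 2^R/(R+1)$, we have $R^\star_\infty\le 2\log_2M$ for large $M$. Hence $\log_2R^\star_\infty=\log_2\log_2M+\mathcal O(1)$. Substituting this into $R^\star_\infty=\log_2M+\tfrac12\log_2R^\star_\infty+\mathcal O(1)$ gives
\begin{equation*}
R^\star_\infty=\log_2 M+\tfrac12\log_2\log_2M+\mathcal O(1),
\end{equation*}
which is exactly \eqref{eq:asymptotic_round}.

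The only step that needs care is justifying the constants in the Stirling bound uniformly over all $R$, including odd $R$ where $\lfloor R/2\rfloor\neq R/2$. We handle this with the elementary bounds $\frac{4^m}{2\sqrt m}\le\binom{2m}{m}\le\frac{4^m}{\sqrt{\pi m}}$, together with $\binom{2m+1}{m}=\frac{2m+1}{m+1}\binom{2m}{m}$, whose ratio to $\binom{2m}{m}$ lies in $[1,2)$. Combining these settles both parities with explicit constants.
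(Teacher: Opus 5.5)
Your proposal is correct and takes the same route as the paper. The paper substitutes $\binom{R}{\lfloor R/2\rfloor}=\Theta(2^R/\sqrt{R})$ into the definition of $R^\star_\infty$ in Theorem~\ref{thm:sperner} and solves for $R$; your sandwich $B(R^\star_\infty-1)<M\le B(R^\star_\infty)$ and the bootstrap on $\log_2 R^\star_\infty$ simply make that inversion rigorous. One harmless slip: for odd $R$, the crude ratio bound $\binom{2m+1}{m}\ge\binom{2m}{m}$ only gives $c_1=1/(2\sqrt{2})$, not $1/2$ (using $\frac{2m+1}{m+1}\ge\frac{3}{2}$ recovers $1/2$), but any absolute constant suffices for the $\mathcal{O}(1)$ conclusion.
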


\begin{IEEEproof}
Let $\Theta(\cdot)$ denote an asymptotically tight bound on the growth rate of a function. The result can be obtained by
substituting the central-binomial-coefficient approximation
$
	\binom{R}{\lfloor R/2\rfloor}
		=\Theta\!\left(\frac{2^R}{\sqrt R}\right)
$
	into \eqref{eq:R_opt_inf} and then solving for $R$.
\end{IEEEproof}
To see the round reduction, Table~\ref{tab:round_scaling} compares the exact round counts from
Theorem~\ref{thm:sperner} with those of the SM-NR scheme. At $N=5$, both schemes require the same number of occupied rounds, whereas the reduction becomes substantial for large
networks. At $N=1000$, SM-NR requires $N+1=1001$ rounds and CC-SM-NR requires $R^\star_\infty+2=15$. 

\begin{table}[t]
	\centering
	\caption{Round Counts for Different Ranging Schemes}
	\label{tab:round_scaling}
	\begin{tabular}{rccc}
		\toprule
		$N$ & SM-NR ($N+1$) & $R^\star_\infty$ & CC-SM-NR ($R^\star_\infty+2$)\\
		\midrule
		5    & 6    & 4  & 6\\
		10   & 11   & 5  & 7\\
		20   & 21   & 6  & 8\\
		50   & 51   & 8  & 10\\
		100  & 101  & 9  & 11\\
		1000 & 1001 & 13 & 15\\
		\bottomrule
	\end{tabular}
\end{table}

Theorem~\ref{thm:sperner} minimizes the number of transmit--listen
rounds, while it does not minimize the number of transmissions per node. In practical systems, a robust schedule
may require each node to transmit a limited number of times. Also, the number of concurrent transmitters in each round should be limited. For example, frequency-division multiplexing offers only a finite number of subbands for per-user ranging,  whereas code-division multiplexing provides a finite set of usable orthogonal signatures \cite{corbalan2020concurrent,schuh2024complex,ellison2022multinode,gu2024cooperative,Said2026Spreading}.
In addition, to reduce the average power consumption and ensure the users' fairness, the number of transmissions from each node in a ranging cycle should be kept as balanced as possible. The transmit--listen round count under these limits is analyzed as follows.

\subsection{Transmit--Listen Rounds Under Concurrency Limit}
\label{subsec:constrained_rounds}

Let the integer $K_{\max}\geq1$ denote the maximum number of distinguishable transmitters per round, i.e., the concurrency limit. Let the integer $q\geq1$ specify the minimum required number of observations in each direction between two non-reference nodes. The number of transmitters in round $r\in[R]$ is $K_r \triangleq \sum_{n=2}^{N}\chi_{n,r}$. Thus,
the minimum-round design problem subject to concurrency constraints is formulated as
\begin{subequations}\label{prob:min_round}
	\begin{align}
		&\underset{\substack{R\in\mathbb N\\{\bm X}\in\{0,1\}^{R\times (N-1)}}}{\operatorname{min}}\quad
		R \label{prob:min_round_obj}\\
		&~~~~~~~~~\text{s.t.}\quad
		\Gamma_{{n_1}\rightarrow {n_2}}\geq q, \qquad
		\substack{\forall n_1,n_2\in\mathcal N/\{1\}\\n_1\neq n_2}, \label{prob:dir_constraint}\\
		&~~~~~~~~~~~~~~~K_r\leq K_{\max}. \qquad
		\forall r\in[R], \label{prob:column_constraint}
	\end{align}
\end{subequations}
For $q=1$ and $K_{\max}\geq N-1$, this problem reduces to the setting of Theorem~\ref{thm:sperner}. For $q=1$ and $K_{\max}=1$, one can obtain $R=N-1$, which recovers the $N+1$ occupied rounds of the SM-NR scheme. For fixed $K_{\max}$, every non-reference node should transmit at least $q$ times, so $R\geq\lceil (N-1)q/K_{\max}\rceil$. 

To analyze the minimum $R$, we focus on the case with minimum directional observation requirement, i.e., $q=1$. To simplify notations, we define
\begin{equation}
M\triangleq N-1.
\end{equation}
Consider the practical case when each non-reference node transmits in exactly $w$ transmit--listen rounds and listens in the remaining $R-w$ rounds. In other words, every non-reference node is assigned a distinct length-$R$ binary codeword of weight $w$. In this way, the schedule ${\bm X}$ can be constructed from a constant-weight class, which provides a simple way to limit each node's transmission load. 

Importantly, constant-weight construction simplifies the bidirectional feasibility condition: For $q=1$, any collection of distinct constant-weight codewords forms an antichain and therefore satisfies the
bidirectional feasibility condition in Lemma~\ref{lem:antichain}. However, it remains to determine {\it whether these codewords can be selected to satisfy the concurrency limit.} To tackle this question, we first introduce the following lemma.
\begin{lemma}[Almost-regular constant-weight codewords]
	\label{lem:almost_regular_cw}
	For any integers $R$, $w$, and $M$ satisfying
	$
		1\leq M\leq \binom{R}{w}
	$,
	there exists a collection of $M$ distinct length-$R$ weight-$w$ codewords
	whose per-round transmitter counts satisfy
	\begin{equation}
		\max_{r\in[R]}K_r-\min_{r\in[R]}K_r\leq 1.
		\label{eq:almost_regular_load}
	\end{equation}
	When the total number of transmissions $Mw$ is divided by $R$, we obtain the following equations:
	\begin{equation}
		u=\left\lfloor\frac{Mw}{R}\right\rfloor,\quad
		s=Mw-uR,\quad 0\leq s<R,
		\label{eq:load_division}
	\end{equation}
	where $u$ and $s$ denote the quotient and remainder, respectively.
	Since $\sum_{r=1}^{R}K_r=Mw$ and the per-round transmitter counts differ by at most one, exactly $s$ rounds satisfy $K_r=u+1$, while the remaining $R-s$ rounds satisfy $K_r=u$.
	Consequently, the minimum achievable peak number of per-round concurrent transmitters is
	\begin{equation}
		\min_{\cal G}\max_{r\in[R]}K_r
		=
		\left\lceil\frac{Mw}{R}\right\rceil,
		\label{eq:optimal_peak_load}
	\end{equation}
	where $\mathcal G$ is a collection of $M$ distinct weight-$w$ subsets of $[R]$.
\end{lemma}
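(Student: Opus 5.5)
The plan is an exchange (load-balancing) argument on the family of chosen weight-$w$ subsets. Start from an arbitrary collection $\mathcal G$ of $M$ distinct weight-$w$ subsets of $[R]$; one exists since $M\le\binom{R}{w}$. The potential is $\Phi(\mathcal G)=\sum_{r\in[R]}K_r^2$. If $\max_r K_r-\min_r K_r\ge 2$, we exhibit a modification of $\mathcal G$ that keeps $M$ distinct weight-$w$ subsets and strictly decreases $\Phi$. Since $\Phi$ is a nonnegative integer, the process terminates, and it can only terminate at a collection satisfying \eqref{eq:almost_regular_load}.

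The key step is the exchange. Fix rounds $a,b$ with $K_a\ge K_b+2$. Let $\mathcal G_a$ be the members containing $a$ but not $b$, and $\mathcal G_b$ the members containing $b$ but not $a$. Members containing both or neither contribute equally to $K_a$ and $K_b$, so $|\mathcal G_a|-|\mathcal G_b|=K_a-K_b\ge2$. Consider the swap map $\sigma:A\mapsto (A/\{a\})\cup\{b\}$. It is injective from $\mathcal G_a$ into the sets containing $b$ but not $a$.

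Suppose every $\sigma(A)$ with $A\in\mathcal G_a$ were already in $\mathcal G$. These images contain $b$ but not $a$, so they lie in $\mathcal G_b$, which gives $|\mathcal G_a|\le|\mathcal G_b|$, a contradiction. Hence some $A\in\mathcal G_a$ has $\sigma(A)\notin\mathcal G$. Replace $A$ by $\sigma(A)$. The new set has weight $w$, and distinctness is preserved. The change is $K_a\to K_a-1$ and $K_b\to K_b+1$, so
\begin{equation*}
\Phi \text{ changes by } -2K_a+1+2K_b+1 = -2(K_a-K_b)+2\le -2<0 .
\end{equation*}
The main point of care is exactly this counting: it guarantees that a collision-free swap always exists.

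Finally derive the arithmetic consequences. With $\sum_r K_r=Mw$ and all $K_r\in\{m,m+1\}$ for some $m$, let $t$ be the number of rounds at $m+1$, so $Mw=mR+t$ with $0\le t\le R$.
\begin{itemize}
\item If $t<R$, uniqueness of Euclidean division gives $m=u$ and $t=s$.
\item If $t=R$, all loads equal $m+1$; relabelling $m+1$ as the common value again gives $u=m+1$ and $s=0$.
\end{itemize}
In either case exactly $s$ rounds carry $u+1$ and $R-s$ rounds carry $u$, so $\max_r K_r=\lceil Mw/R\rceil$.

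For the converse bound, any collection $\mathcal G$ has $\max_r K_r\ge \frac1R\sum_r K_r=Mw/R$. Since loads are integers, $\max_r K_r\ge\lceil Mw/R\rceil$. Together with the constructed balanced collection, this establishes \eqref{eq:optimal_peak_load}.
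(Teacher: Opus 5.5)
Your proof is correct and uses the same exchange argument as the paper. The paper takes a collection minimizing $\sum_r K_r^2$ and derives a contradiction from the injection $\mathcal A\mapsto(\mathcal A/\{a\})\cup\{b\}$, whereas you run the same swap as a terminating descent, and your averaging bound $\max_r K_r\ge\lceil Mw/R\rceil$ makes explicit the lower-bound half of the peak claim, which the paper leaves implicit.
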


\emph{Proof:} See Appendix~\ref{app:proof_almost_regular_cw}.

By specializing this lemma, one can obtain the following theorem.
\begin{theorem}[Necessary and sufficient condition for constant-weight feasibility]
	\label{thm:exact_cw_feasibility}
	For $q=1$, there exists a bidirectionally feasible length-$R$ and weight-$w$ transmit--listen
	schedule serving $M$ non-reference nodes, which satisfies
	\begin{equation}
		K_r\leq K_{\max},\qquad \forall r\in[R],
	\end{equation}
	if and only if
	\begin{align}
		\binom{R}{w}&\geq M,
		\label{eq:cw_codeword_condition}\\
		Mw&\leq RK_{\max}.
		\label{eq:cw_capacity_condition}
	\end{align}
\end{theorem}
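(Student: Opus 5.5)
The argument splits into necessity and sufficiency, and both reduce quickly to Lemma~\ref{lem:antichain} and Lemma~\ref{lem:almost_regular_cw}.

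\emph{Necessity.} Suppose a bidirectionally feasible length-$R$, weight-$w$ schedule serving $M$ non-reference nodes exists with $K_r\leq K_{\max}$ for all $r$. By Lemma~\ref{lem:antichain}, the supports ${\cal A}_n$ are pairwise incomparable, hence pairwise distinct. Each support is a $w$-subset of $[R]$, so there are $M$ distinct $w$-subsets and \eqref{eq:cw_codeword_condition} follows. Next, count the ones in $\bm X$ in two ways: column sums give $Mw$ and row sums give $\sum_{r=1}^R K_r$. Therefore
\begin{equation*}
Mw=\sum_{r=1}^{R}K_r\leq RK_{\max},
\end{equation*}
which is \eqref{eq:cw_capacity_condition}.

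\emph{Sufficiency.} Assume \eqref{eq:cw_codeword_condition} and \eqref{eq:cw_capacity_condition}. Since $1\leq M\leq\binom{R}{w}$, Lemma~\ref{lem:almost_regular_cw} provides $M$ distinct weight-$w$ codewords with $\max_r K_r=\lceil Mw/R\rceil$. From \eqref{eq:cw_capacity_condition} we have $Mw/R\leq K_{\max}$, and because $K_{\max}$ is an integer, $\lceil Mw/R\rceil\leq K_{\max}$. Hence $K_r\leq K_{\max}$ for every round. Assign these codewords to the $M$ non-reference nodes as the columns of $\bm X$. Distinct sets of equal cardinality cannot be nested, so the family is an antichain of size $M$. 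By Lemma~\ref{lem:antichain}, the schedule is then bidirectionally feasible, which for $q=1$ is exactly \eqref{prob:dir_constraint}.

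The only substantive ingredient is the existence of an almost-regular collection, which Lemma~\ref{lem:almost_regular_cw} supplies; within this theorem there is no real obstacle. The one point needing care is the ceiling step, where the integrality of $K_{\max}$ turns the averaged capacity bound into the per-round bound. One should also note the implicit range $1\leq w\leq R$, which \eqref{eq:cw_codeword_condition} with $M\geq1$ already enforces.
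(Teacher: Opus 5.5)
Your proposal is correct and follows essentially the same route as the paper. You show necessity by counting distinct weight-$w$ codewords and double-counting the ones in the schedule matrix. You show sufficiency from the almost-regular lemma plus the integrality of $K_{\max}$ in the ceiling step, and then use the fact that equal-weight distinct supports form an antichain. You also make explicit, via the antichain lemma, the distinctness of supports that the paper's necessity argument uses only implicitly.
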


\begin{IEEEproof}
	Condition \eqref{eq:cw_codeword_condition} is necessary because there
	are only $\binom{R}{w}$ distinct length-$R$ weight-$w$ codewords. Moreover,
	the schedule contains exactly $Mw$ ones, whereas its $R$ rounds can
	contain at most $RK_{\max}$ ones, which establishes
	\eqref{eq:cw_capacity_condition}. For sufficiency, suppose that
	\eqref{eq:cw_codeword_condition} and \eqref{eq:cw_capacity_condition} hold.
	Using Lemma~\ref{lem:almost_regular_cw}, we can select $M$
	weight-$w$ codewords such that $
	\max_{r\in[R]}K_r
	=
	\left\lceil\frac{Mw}{R}\right\rceil
	$.
	Since $K_{\max}$ is an integer, \eqref{eq:cw_capacity_condition} implies that
	$
	\left\lceil\frac{Mw}{R}\right\rceil
	\leq K_{\max}
	$.
	Thus, every round satisfies the concurrency limit. Furthermore, the
	selected codewords are distinct and have equal weight, so their supports
	are pairwise incomparable. According to Lemma~\ref{lem:antichain}, the resulting
	schedule is bidirectionally feasible, which completes the proof.
\end{IEEEproof}

It follows from Theorem~\ref{thm:exact_cw_feasibility} that the minimum number of transmit--listen rounds, within the class of constant-weight codewords, can be obtained by solving
\begin{equation}
	\begin{split}
		R_{\rm cw}^{\star}
		=
		\min
		{\Big\{}
		R\in\mathbb N:
		\exists\,w\in[R]
		\quad \text{s.t.} \\
		\binom{R}{w}\geq M,\;
		Mw\leq RK_{\max}
		{\Big \}}.
	\end{split}	
	\label{eq:exact_cw_round}
\end{equation}
A parameter search can be used to determine $R_{\rm cw}^{\star}$. Specifically, $R$ is increased from $R=1$. For every candidate
$R$, all weights $w\in[R]$ should be examined in turn. The search terminates
at the first pair $(R,w)$ that satisfies the conditions in
\eqref{eq:exact_cw_round}.

In the special case when $M=\binom{R}{w}$, we have the following corollary.
\begin{corollary}[Complete constant-weight layer]
	\label{cor:complete_cw_layer}
	When $\binom{R}{w}$ distinct codewords in the length-$R$ weight-$w$ codeword class are all used for $M$ non-reference nodes,
	then every round contains exactly
	$K_r=
	\frac{w}{R}\binom{R}{w}
	$
	transmitters.
\end{corollary}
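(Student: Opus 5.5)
The plan is a symmetry (double-counting) argument, with Lemma~\ref{lem:almost_regular_cw} as a cross-check. When $M=\binom{R}{w}$, the family $\mathcal F$ must be the entire weight-$w$ layer of $2^{[R]}$, since there is no other set of $\binom{R}{w}$ distinct weight-$w$ subsets. For a fixed round $r\in[R]$, the transmitter count $K_r$ is the number of weight-$w$ subsets of $[R]$ that contain $r$. Choosing the remaining $w-1$ elements from $[R]/\{r\}$ gives
\begin{equation*}
K_r=\binom{R-1}{w-1}=\frac{w}{R}\binom{R}{w},
\end{equation*}
by the identity $R\binom{R-1}{w-1}=w\binom{R}{w}$.

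The count does not depend on $r$. Every permutation of $[R]$ maps the full layer onto itself, so all rounds are equivalent.

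As a consistency check, the same value follows from Lemma~\ref{lem:almost_regular_cw}. Summing over rounds gives $\sum_r K_r=Mw=w\binom{R}{w}$, which is divisible by $R$ because $\frac{w}{R}\binom{R}{w}=\binom{R-1}{w-1}$ is an integer. Hence $s=0$ and $u=\frac{w}{R}\binom{R}{w}$. However, the lemma only asserts that \emph{some} collection of $M$ codewords is almost regular. Here the collection is forced to be the full layer, so the direct counting argument above is the actual proof, and the lemma serves only as a check.

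The main point needing care is that $M=\binom{R}{w}$ leaves no choice of codewords at all. The load pattern is therefore a property of the layer itself, not of a scheduling choice, and the proof is otherwise elementary.
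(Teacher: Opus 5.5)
Your proof is correct and matches the paper's: fix a round $r$, count the weight-$w$ supports containing $r$ as $\binom{R-1}{w-1}$, and rewrite this as $\frac{w}{R}\binom{R}{w}$. One small remark: because $M=\binom{R}{w}$ makes the full layer the only admissible collection, the existence statement of Lemma~\ref{lem:almost_regular_cw} does apply to that layer, so your cross-check is a valid second derivation (given that $R$ divides $w\binom{R}{w}$) rather than merely a sanity check.
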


\begin{IEEEproof}
	Fix an arbitrary round $r$. A weight-$w$ codeword contains a one in
	coordinate $r$ if and only if its remaining $w-1$ ones are selected from
	the other $R-1$ coordinates. Hence, exactly
	$\binom{R-1}{w-1}$ codewords transmit in round $r$, and the same argument
	applies to every round. Thus we have
	\begin{equation}
		K_r=\binom{R-1}{w-1}
		=
		\frac{(R-1)!}{(w-1)!(R-w)!}
		=
		\frac{w}{R}\binom{R}{w},
	\end{equation}
	which completes the proof.
\end{IEEEproof}

\section{Low-Complexity Schedule Design}
\label{sec:schedule_construction}
Building on the round count analysis, we develop an optimal constant-weight scheduling algorithm in Section~\ref{subsec:q1_construction}. To improve ranging robustness, this algorithm is then extended to support higher observation redundancy in Section~\ref{subsec:qgreater1_construction}. In Section~\ref{subsec:complexity}, we analyze the computational complexity of the proposed algorithms. In Section~\ref{subsec:partial_connectivity}, we discuss the schedule feasibility in partially connected networks.

\subsection{Optimal Constant-Weight Schedule Design for $q=1$}
\label{subsec:q1_construction}
\begin{algorithm}[!t]
	\caption{Optimal Schedule Design for $q=1$}
	\label{alg:q1_exchange}
	\begin{algorithmic}[1]
		\REQUIRE $M$ and $K_{\max}$
		\ENSURE $R$, $w$, and the codeword family $\mathcal F$
		\STATE $R\leftarrow R_\infty^\star$
		\WHILE{true}
		\STATE Find $w \leftarrow w_R$ from \eqref{eq:smallest_weight_given_R}
		\IF{$Mw\leq RK_{\max}$}
		\STATE \textbf{break}
		\ENDIF
		\STATE $R\leftarrow R+1$
		\ENDWHILE
		\STATE Select any $M$ distinct weight-$w$ codewords as $\mathcal F$
		\STATE Compute $K_r$ for all $r\in[R]$
		\WHILE{$\max_rK_r-\min_rK_r>1$}
		\STATE Choose $a\in\argmax_rK_r$ and $b\in\argmin_rK_r$
		\STATE Find ${\cal A}\in\mathcal F$ with $a\in {\cal A}$ and $b\notin {\cal A}$, such that $({\cal A}/\{a\})\cup\{b\}\notin{\mathcal F}$
		\STATE Set ${\cal A}'\leftarrow ({\cal A}/\{a\})\cup\{b\}$
		\STATE $\mathcal F\leftarrow(\mathcal F/\{{\cal A}\})\cup\{{\cal A}'\}$
		\STATE $K_a\leftarrow K_a-1$ and $K_b\leftarrow K_b+1$
		\ENDWHILE
		\STATE \textbf{return} $R$, $w$, and $\mathcal F$
	\end{algorithmic}
\end{algorithm}

To find the minimum feasible round count within the constant-weight class, the optimal schedule design is summarized in Algorithm~\ref{alg:q1_exchange}, which balances the per-round transmission loads through weight-preserving exchanges. For a fixed $R$, it is sufficient to search
$1\leq w\leq\lfloor R/2\rfloor$, since a larger weight can be replaced by its complementary weight. The smallest available weight can be obtained by solving
\begin{equation}
	w_R
	\triangleq
	\min\left\{
	w\in\left\{1,\ldots,\left\lfloor\frac{R}{2}\right\rfloor\right\}:
	\binom{R}{w}\geq M
	\right\}.
	\label{eq:smallest_weight_given_R}
\end{equation}
If $Mw_R>RK_{\max}$, no larger weight can satisfy the concurrency limit
for the same $R$. Therefore, only $w_R$ needs to be checked for each candidate $R$. The search can start from the unconstrained optimum $R_\infty^\star$ in
\eqref{eq:R_opt_inf}. Once a feasible pair $(R,w)$ is found, $M$ distinct weight-$w$ codewords can be selected. Then, according to Lemma~\ref{lem:almost_regular_cw}, the per-round loads can be balanced by the exchange procedure in Algorithm~\ref{alg:q1_exchange}.

During execution of the algorithm, Theorem~\ref{thm:exact_cw_feasibility} can ensure that the first pair found by the parameter search has $R=R_{\rm cw}^{\star}$. Moreover, Lemma~\ref{lem:almost_regular_cw} guarantees the existence of the exchange in
every iteration. Each exchange preserves codeword weight and distinctness and strictly decreases the imbalance potential $\sum_{r=1}^{R}K_r^2$. The algorithm therefore terminates
after a finite number of exchanges and returns a schedule satisfying
\begin{equation}
	\max_{r\in[R]}K_r
	=
	\left\lceil\frac{Mw}{R}\right\rceil
	\leq K_{\max}.
\end{equation}
Consequently, this search can find the optimal solution within the constant-weight class without a binary optimizer. 

\subsection{Extension to Higher Observation Redundancy}
\label{subsec:qgreater1_construction}
\begin{algorithm}[!t]
	\caption{Greedy Codeword Construction for Fixed $(R,w)$}
	\label{alg:qgreater1_greedy}
	\begin{algorithmic}[1]
		\REQUIRE $M$, $R$, $w$, $q$, and $K_{\max}$
		\ENSURE A feasible family $\mathcal F$, or failure
		\STATE $\mathcal F\leftarrow\varnothing$ and $K_r\leftarrow0$,  $\forall r\in[R]$
		\FOR{$i\in \{2,\cdots,N\}$}
		\STATE Construct $\mathcal C(\mathcal F)$ using
		\eqref{eq:admissible_candidate_set}
		\IF{$\mathcal C(\mathcal F)=\varnothing$}
		\STATE \textbf{return failure}
		\ENDIF
		\STATE Select ${\cal A}^\star$ according to \eqref{eq:greedy_load_metric}
		\STATE $\mathcal F\leftarrow\mathcal F\cup\{{\cal A}^\star\}$
		\STATE $K_r\leftarrow K_r+1$ for every $r\in {\cal A}^\star$
		\ENDFOR
		\STATE \textbf{return} $\mathcal F$
	\end{algorithmic}
\end{algorithm}

To improve the robustness of network ranging, we consider the case when each range has redundant observations, i.e., $q>1$. To extend the scheduling design to address $q>1$, for fixed $(R,w)$, we retain parameter search and replace load-balancing exchanges with the greedy construction in Algorithm~\ref{alg:qgreater1_greedy}. For two weight-$w$ supports ${\cal A}_{n_1}=\supp({\bm\chi}_{n_1})$ and
${\cal A}_{n_2}=\supp({\bm\chi}_{n_2})$ with $n_1,n_2\in{\cal N}/\{1\}$, the directional observation counts can be written as
\begin{equation}
	\Gamma_{n_1\rightarrow n_2}
	=
	\Gamma_{n_2\rightarrow n_1}
	=
	w-|{\cal A}_{n_1}\cap {\cal A}_{n_2}|
	=
	\frac{1}{2}d_H(\bm\chi_{n_1},\bm\chi_{n_2}),
	\label{eq:q_hamming_relation}
\end{equation}
where $d_H(\bm\chi_{n_1},\bm\chi_{n_2})$ denotes the Hamming distance
between codewords $\bm\chi_{n_1}$ and $\bm\chi_{n_2}$. Thus, the requirement of at least $q$ observations per direction is equivalent to
$d_H(\bm\chi_{n_1},\bm\chi_{n_2})\geq 2q$, or $|{\cal A}_{n_1}\cap {\cal A}_{n_2}|\leq w-q$. The weight search can therefore be restricted to
$q\leq w\leq R-q$. Given a family of selected codewords $\mathcal F$, the candidate set can be defined as
\begin{equation}
	\mathcal C(\mathcal F)
	=
	\left\{
	{\cal A}\subseteq [R]:
	\begin{array}{l}
		|{\cal A}|=w,\\
		|{\cal A}\cap {\cal B}|\le w-q,\quad \forall {\cal B}\in\mathcal F,\\
		K_r+1 \leq K_{\max},\quad \forall r\in {\cal A}
	\end{array}
	\right\}.
	\label{eq:admissible_candidate_set}
\end{equation}
We note that adding any candidate in $\mathcal C(\mathcal F)$ preserves both the minimum Hamming-distance requirement and the per-round concurrency limit. To balance the
per-round loads, the next codeword can be randomly selected from the minimum-load candidates, i.e.,
\begin{equation}
	{\cal A}^\star
	\in
	\argmin_{{\cal A}\in\mathcal C(\mathcal F)}
	\sum_{r\in {\cal A}}K_r.
	\label{eq:greedy_load_metric}
\end{equation}

\begin{algorithm}[!t]
	\caption{Greedy Schedule Extension for $q>1$}
	\label{alg:qgreater1_adaptive}
	\begin{algorithmic}[1]
		\REQUIRE $M$, $q$, $K_{\max}$, and $N_{\rm rst}$
		\ENSURE $R_{\rm alg}$, $w$, and a feasible family $\mathcal F$
		\STATE Compute $R_{\rm cw,LB} \leftarrow \max\{R_{\rm cw}^\star,2q\}$
		\FOR{$R\in\{R_{\rm cw,LB},\cdots,Mq-1\}$}
		\STATE Construct set ${\mathcal W}_R$ using \eqref{eq:W_R_set}
		\FOR{$w\in{\mathcal W}_R$}
		\FOR{$s \in [N_{\rm rst}]$}
		\STATE Attempt a feasible family using Algorithm~\ref{alg:qgreater1_greedy} with $(M,R,w,q,K_{\max})$
		\IF{a family $\mathcal F$ is returned}
		\STATE $R_{\rm alg}\leftarrow R$
		\STATE \textbf{return} $R_{\rm alg}$, $w$, and $\mathcal F$
		\ENDIF
		\ENDFOR
		\ENDFOR
		\ENDFOR
		\STATE $R_{\rm alg}\leftarrow Mq$, $w\leftarrow q$, and ${\cal F}\leftarrow\{{\cal A}_2,\cdots,{\cal A}_N\}$ where ${\cal A}_n
		=
		\{(n-2)q+1,\ldots,(n-1)q\}$
		\STATE \textbf{return} $R_{\rm alg}$, $w$, and $\mathcal F$
	\end{algorithmic}
\end{algorithm}

To find feasible $(R,w)$, Algorithm~\ref{alg:qgreater1_adaptive} implements the extension by utilizing the greedy construction in Algorithm~\ref{alg:qgreater1_greedy}. A greedy construction may encounter
$\mathcal C(\mathcal F)=\varnothing$ before all $M$ codewords have
been found. However, this does not imply that the
parameter tuple $(M,R,w,q,K_{\max})$ is infeasible, and the greedy construction may have led to a dead end. Thus, we use multiple random restarts and progressively increase the number
of transmit--listen rounds $R$ in parameter search.

Specifically, the parameter search over $R$ can start from
$R_{\rm cw,LB} \triangleq \max\{R_{\rm cw}^\star,2q\}$, wherein  $R_{\rm cw}^\star$ is from \eqref{eq:exact_cw_round}. For a given $R$, the set of candidate weights is defined as
\begin{equation}\label{eq:W_R_set}
	{\mathcal W}_R
	=
	\left\{
	w\in
	\left\{
	q,\ldots,
	R-q
	\right\}
	:
	\binom{R}{w}\!\ge M,\,
	Mw\!\le RK_{\max}
	\right\}.
\end{equation}
Then, the
proposed construction searches over $w\in{\mathcal W}_R$ and performs
several greedy constructions with random selection in \eqref{eq:greedy_load_metric}. Let $N_{\rm rst}$ denote the number of random
restarts.
If no complete
family $\cal F$ is found, the search should restart with a larger $R$. If a complete $\cal F$ is returned, the obtained codewords can inherently satisfy the directional observation requirement and concurrency limit.

In particular, the proposed scheme has a deterministic
worst-case fallback. When
$R=Mq$ and $w=q$,
a feasible schedule is obtained by assigning ${\cal F}=\{{\cal A}_2,\cdots,{\cal A}_N\}$, where
\begin{equation}
	{\cal A}_n
	=
	\{(n-2)q+1,\ldots,(n-1)q\},
	~~ \forall n\in {\cal N}/\{1\}.
\end{equation}
These supports are non-overlapping, and hence
$
	|{\cal A}_{n_1}\cap {\cal A}_{n_2}|=0=w-q,
	 \forall n_1\neq n_2,
$
which gives
$
d_H(\bm\chi_{n_1},\bm\chi_{n_2})=2q
$.
Moreover, every round contains exactly one transmitting node, so that
$
K_r=1\le K_{\max}
$
holds. Therefore, a feasible schedule always exists
for $R=Mq$. Together with the
lower bound, the achievable round count $R_{\rm alg}$ is bounded by
\begin{equation}
	R_{\rm cw,LB}
	\le
	R_{\rm alg}
	\le
	Mq,
\end{equation}
which restricts the search range of the round count.

\subsection{Computational Complexity}
\label{subsec:complexity}
We compare the computational complexity of the scheduling algorithm and its extension with the exhaustive search schemes. For a fixed pair $(R,w)$, let
$B\triangleq\binom{R}{w}$ denote the number of all candidate codewords.
After removing the permutations of node indices, the search space only
contains $\binom{B}{M}$ families. Checking
one family requires $\mathcal O(Mw+R)$ operations for $q=1$, whereas the Hamming distance calculations for $q>1$ require $\mathcal O(M^2w)$. The complexities of the exhaustive search schemes are derived as
\begin{equation}
	\begin{aligned}
		C_{\rm ex}^{(1)}
		&=\mathcal O\!\left(\binom{B}{M}(Mw+R)\right),\\
		C_{\rm ex}^{(q)}
		&=\mathcal O\!\left(\binom{B}{M}M^2w\right),~~q>1.
	\end{aligned}
	\label{eq:exhaustive_complexity}
\end{equation}

For Algorithm~\ref{alg:q1_exchange}, the parameter search over $R$ and $w_R$
requires at most $\mathcal O((R_{\rm cw}^{\star})^2)$ operations. The number of codeword exchanges is bounded by
$\mathcal O(M^2w)$. In each exchange, searching the selected supports and per-round loads costs $\mathcal O(M+R)$. Thus, the complexity of Algorithm~\ref{alg:q1_exchange} is
\begin{equation}
	C_{\rm prop}^{(1)} = \mathcal O\left(
	(R_{\rm cw}^{\star})^2+M^2w(M+R)
	\right).
	\label{eq:q1_algorithm_complexity}
\end{equation}
In this way, Algorithm~\ref{alg:q1_exchange} removes the combinatorial factor $\binom{B}{M}$ in \eqref{eq:exhaustive_complexity}, which dominates in the complexity of exhaustive search. 

For the extension in Algorithm~\ref{alg:qgreater1_adaptive}, 
each codeword-selection step scans at most $B$ candidates. The complexity of the full greedy construction for fixed
$(R,w)$ is therefore
$
	\mathcal O\left(M^2wB\right)
$.
Including the parameter search over $R$ and $w$, with $N_{\rm rst}$ random restarts, the computational complexity is
\begin{equation}
	\begin{split}
		C_{\rm prop}^{(q)} = \mathcal O\!\left(
		N_{\rm rst}M^2
		\sum_{\rho=R_{\rm cw,LB}}^{R_{\rm alg}}
		\sum_{w\in\mathcal W_{\rho}}
		w\binom{\rho}{w}
		+Mq\right).
	\end{split}
	\label{eq:qgreater1_algorithm_complexity}
\end{equation}
One can find that Algorithm~\ref{alg:qgreater1_adaptive} replaces the family enumeration,
which is approximately of order $B^{M}$, with only searching the candidate-codeword space. In this way, the constraints in ${\mathcal W}_R$, the early termination, and the $\mathcal O(Mq)$ fallback can reduce the practical computational load effectively.

\subsection{Discussion of Partially Connected Networks}
\label{subsec:partial_connectivity}
For a partially connected network where some nodes are not visible to each other, the proposed transmit--listen schedules, which are designed for full connectivity, are still applicable. Firstly, the links related to node 1 are required to be bidirectionally available for every node $n\in{\cal N}/\{1\}$. Otherwise, the node cannot achieve clock-scale compensation and thus will not be scheduled in the network ranging. Secondly, in this scenario, the ranges are only updated for available links, while the other ranges associated with missing links do not need to be estimated.

Using partial connections to further reduce rounds requires adjustments to the constraints of node links. Specifically, for a partially-connected network, let $\mathcal E$ denote the set of links requiring range updates between non-reference nodes. For a given $\mathcal E$, the node-pair requirements in \eqref{prob:dir_constraint} can be replaced by
\begin{equation}
\Gamma_{n_1\rightarrow n_2}\geq q,\quad
\Gamma_{n_2\rightarrow n_1}\geq q,\quad
\forall\{n_1,n_2\}\in\mathcal E.
\end{equation}
To enable range estimation with other nodes, each participating node should transmit at least once. Under these requirements and the same concurrency limit, the resulting minimum $R$ will not exceed that for full connectivity. Consequently, the proposed scheduling algorithm can still work in partially-connected networks.

\section{Numerical Simulation Results}\label{sec:nsr}

In this section, numerical simulations are carried out to verify the effectiveness of the proposed schemes. We specify the simulation setup in Section~\ref{subsec:simulation_setup} and compare schedule construction efficiency in Section~\ref{subsec:construction_time}. The round counts and ranging accuracy for $q=1$ and $q>1$ are evaluated in Sections~\ref{subsec:simulation_q1} and~\ref{subsec:simulation_qgt1}, respectively. We examine the effects of the concurrency limit, redundancy, network size, and time budget in Section~\ref{subsec:simulation_tradeoffs}. Finally, the relative-localization accuracy is evaluated in Section~\ref{subsec:relative_localization}.

\begin{table*}[!t]
\centering
\caption{\normalsize Construction times of schedules in the constant-weight codeword class.}
\label{tab:construction_representative}
\begingroup
\normalsize
\setlength{\tabcolsep}{4pt}
\renewcommand{\arraystretch}{1.12}
\begin{tabular*}{\textwidth}{@{\extracolsep{\fill}} c c c c r r r @{}}
\toprule
Algorithm & $(N,q,K_{\max})$ & $(R_{\rm ex}^{\star},w^*)$ & $(\widehat R,\widehat w)$ & Construction (ms) & Exhaustive (ms) & Speedup ($\times$) \\
\midrule
1 & $(40,1,10)$ & $(10,2)$ & $(10,2)$ & 0.273 & 0.399 & 1.46 \\
1 & $(60,1,15)$ & $(12,2)$ & $(12,2)$ & 0.681 & 1.007 & 1.48 \\
1 & $(100,1,25)$ & $(12,3)$ & $(12,3)$ & 1.342 & 13754.607 & 10253.01 \\
1 & $(200,1,50)$ & Not obtained & $(12,3)$ & 1.742 & Timeout (30 s) & N/A \\
\midrule
3 & $(12,2,3)$ & $(11,3)$ & $(11,3)$ & 5.036 & 57.910 & 11.50 \\
3 & $(14,2,3)$ & $(13,3)$ & $(13,3)$ & 5.910 & 235.815 & 39.90 \\
3 & $(16,2,4)$ & $(12,3)$ & $(12,3)$ & 5.281 & 27.922 & 5.29 \\
3 & $(20,2,5)$ & $(12,3)$ & $(12,3)$ & 6.217 & 17.417 & 2.80 \\
3 & $(40,2,10)$ & Not obtained & $(16,4)$ & 44.507 & Timeout (30 s) & N/A \\
3 & $(100,2,25)$ & Not obtained & $(16,4)$ & 53.945 & Timeout (30 s) & N/A \\
\midrule
3 & $(4,3,2)$ & $(9,3)$ & $(9,3)$ & 6.268 & 7.238 & 1.15 \\
3 & $(8,3,2)$ & $(14,4)$ & $(15,4)$ & 15.341 & 4294.739 & 279.96 \\
3 & $(10,3,3)$ & $(12,4)$ & $(15,4)$ & 14.900 & 128.002 & 8.59 \\
3 & $(14,3,4)$ & $(13,4)$ & $(16,4)$ & 24.095 & 382.457 & 15.87 \\
3 & $(15,3,4)$ & $(14,4)$ & $(16,4)$ & 18.463 & 1387.612 & 75.16 \\
3 & $(16,3,4)$ & $(15,4)$ & $(16,4)$ & 14.141 & 8782.969 & 621.08 \\
3 & $(17,3,4)$ & $(16,4)$ & $(16,4)$ & 8.182 & 48906.002 & 5976.98 \\
3 & $(40,3,10)$ & Not obtained & $(20,5)$ & 142.972 & Timeout (60 s) & N/A \\
3 & $(100,3,25)$ & Not obtained & $(23,5)$ & 2016.052 & Timeout (60 s) & N/A \\
\bottomrule
\end{tabular*}
\par\vspace{4pt}
\begin{minipage}{\textwidth}
\footnotesize
$(R_{\rm ex}^{\star},w^*)$ and $(\widehat R,\widehat w)$ denote the schedules returned by exhaustive search and the proposed algorithm, respectively. Speedup is the ratio of the median exhaustive-search time to the median construction time. ``Timeout (30 s)'' and ``Timeout (60 s)'' indicate that exhaustive search reached the corresponding time limit without returning a result. In these cases, the exhaustive-search schedule was not obtained and the speedup is unavailable (N/A).
\end{minipage}
\endgroup
\end{table*}

\subsection{Simulation Setup}

\label{subsec:simulation_setup}
Unless otherwise specified, we consider $N=100$ nodes, which are uniformly distributed within a $200\times200$~m$^2$ region. Clock-frequency deviations are uniformly distributed within
$[-40,40]$~ppm. Each round occupies one reserved slot of duration $\Delta=1$~ms, including transmission, processing, and guard time. In the simulations, the distribution of ${\rv w}_\alpha^{(n)}$ in \eqref{eq:clock_model} depends on whether node $n$ transmits or receives in event $\alpha$. We denote the corresponding transmission (TX) and reception (RX) timestamp errors by ${\rv w}_{\rm TX}$ and ${\rv w}_{\rm RX}$, respectively, and set
\begin{equation}
	{\rv w}_{\rm TX}\sim\mathcal N(0,\sigma^2)~~\text{and}~~
	{\rv w}_{\rm RX}\sim\mathcal N(0,K_r\sigma^2),
	\label{eq:sim_noise}
\end{equation}
where $K_r$ is the actual number of simultaneous transmitters in round $r$, and $\sigma$ is the single-transmitter timestamp-error standard deviation. The RX timestamp-error variance scales linearly with $K_r$, while the TX timestamp-error variance remains $\sigma^2$. This model captures the $\sqrt{K_r}$-scaling law of timestamp-error standard-deviation in time-division or code-division ranging \cite{SheWin:J10a}. The default noise level is $\sigma=1$~ns. The following two ranging schemes are compared in simulations.

\begin{enumerate}
	\item {\bf SM-NR \cite{zhang2022smnr} (baseline)}: The existing SM-NR scheme in \cite{zhang2022smnr}, including its protocol and ToF estimator, is adopted for network ranging as a baseline. It requires $N+1$ rounds in each ranging cycle.
	\item {\bf CC-SM-NR (proposed)}: The proposed CC-SM-NR scheme is adopted for ranging, which costs $R+2$ rounds in each cycle. Algorithm~\ref{alg:q1_exchange} is used for schedule design when $q=1$. For $q>1$, the extension in Algorithm~\ref{alg:qgreater1_adaptive} is used with $N_{\rm rst}=20$ random restarts. The ML estimator in \eqref{eq:ToF_estimator} is used for range measurements.
\end{enumerate}
The metric to evaluate the ranging accuracy is the root mean square error (RMSE), defined as
\begin{equation}
	\operatorname{RMSE}=
	\sqrt{{\mathbb E}\left\{ \frac{2}{N(N\!-\!1)}
		\sum_{n_1<n_2}
		(\widehat d_{n_1,n_2}-d_{n_1,n_2})^2
		\right\}
		},
	\label{eq:sim_metric}
\end{equation}
where $\widehat d_{n_1,n_2}$ is the estimation of $d_{n_1,n_2}$ from \eqref{eq:weighted_ToF}. For all schemes, we consider the averaged performance of $10^4$ independent topology and clock state trials in simulations.

\subsection{Schedule Construction Time}
\label{subsec:construction_time}

In Table~\ref{tab:construction_representative}, we compare the construction times with exhaustive search over the same constant-weight codeword class. The simulations use MATLAB R2024b on an AMD Ryzen 9 7945HX CPU with 63.7~GiB RAM. Speedup is the ratio of the median exhaustive-search time to the median construction time. Compared with exhaustive search, the scheduling algorithm and its extension can reduce the construction time by avoiding enumeration of complete codeword families. In these cases, speedups range from $1.5\times$ to $10253.0\times$ for Algorithm~\ref{alg:q1_exchange} and from $1.2\times$ to $5977.0\times$ for Algorithm~\ref{alg:qgreater1_adaptive}. For $q=1$, Algorithm~\ref{alg:q1_exchange} attains the minimum round count within the constant-weight class. For $q>1$, the greedy schedule design can require additional rounds, with a largest gap of three rounds among the cases certified by exhaustive search, e.g., $(N,q,K_{\max})=(10,3,3)$. These results demonstrate the complexity analysis in Section~\ref{subsec:complexity}.

\subsection{Scheduling and Accuracy when $q=1$}
\label{subsec:simulation_q1}
\begin{figure}[!tbp]
	\setlength{\abovecaptionskip}{0pt}
	\centering\includegraphics[width=\columnwidth,trim=0 9.75bp 0 0,clip]{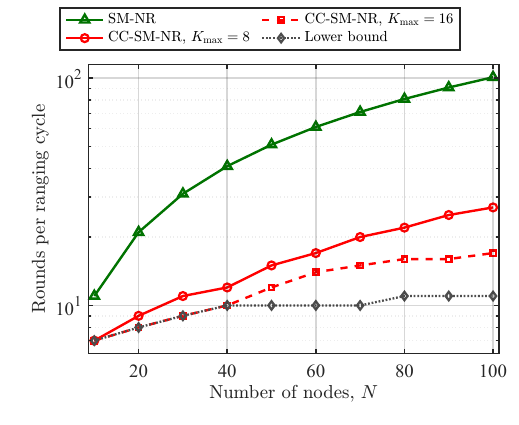}
	\caption{Occupied rounds per ranging cycle for different schemes when $q=1$.}
	\label{fig:sim_scaling_q1}
\end{figure}

\begin{figure}[!tbp]
	\setlength{\abovecaptionskip}{0pt}
	\centering\includegraphics[width=\columnwidth,trim=0 9.25bp 0 0,clip]{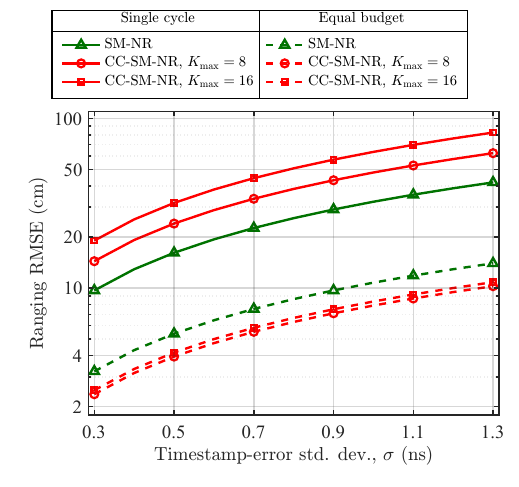}
	\caption{Ranging RMSE as a function of timestamp-error standard deviation $\sigma$ for $q=1$.}
	\label{fig:sim_q1_budget}
\end{figure}

We consider $K_{\max}\in\{8,16\}$. Let $L_s$ denote the number of occupied rounds in one ranging cycle, where $s\in\{\mathrm{SM},\mathrm{CC}\}$ identifies SM-NR and CC-SM-NR, respectively. Including the two reference broadcasts, the round counts are
\begin{equation}
 L_{\rm SM}=Mq+2,\qquad L_{\rm CC}=R+2,
 \label{eq:sim_costs}
\end{equation}
where $M=N-1$. Thus, $L_s$ equals $L_{\rm SM}$ or $L_{\rm CC}$ for the corresponding protocol, and each cycle lasts $L_s\Delta$. For $q=1$, $L_{\rm SM}=N+1$. The extension for $q>1$ is specified in \eqref{eq:sim_smnr_q}.

Fig.~\ref{fig:sim_scaling_q1} shows that the concurrent scheduling reduces the occupied rounds as the network size grows. The lower bound is $R_\infty^\star+2$, obtained from Theorem~\ref{thm:sperner}. For $K_{\max}=16$, the constructed schedules attain this bound over $10\le N\le40$.
For larger $N$, more transmissions are needed. Due to the per-round concurrency limit, additional rounds are required even for an optimal constant-weight schedule. With fixed $K_{\max}$, the logarithmic growth does not hold as $N$ increases.

Fig.~\ref{fig:sim_q1_budget} compares single-cycle and equal-budget accuracy. The former uses one complete cycle per scheme. The latter allows $T_{\rm budget}=1000$~ms and averages the available ToF estimates over $J_s=\lfloor T_{\rm budget}/(L_s\Delta)\rfloor$ complete cycles for each node pair. One can observe that
the RMSE increases approximately linearly with $\sigma$. At $\sigma=1$~ns, the single-cycle ranging RMSEs are $32.3$~cm, $48.0$~cm, and $63.5$~cm for SM-NR and CC-SM-NR with $K_{\max}=8$ and $16$, respectively. This increase reflects the larger RX timestamp-error variance associated with concurrent transmissions under \eqref{eq:sim_noise}. The corresponding cycle durations are $101$~ms, $27$~ms, and $17$~ms, allowing $9$, $37$, and $58$ complete cycles within 1000~ms. The shorter cycles provide more measurements for averaging, reducing the equal-budget ranging RMSEs to $10.8$~cm, $7.9$~cm, and $8.3$~cm. Despite its larger single-cycle errors, CC-SM-NR reduces the equal-budget ranging RMSE by approximately $27\%$ and $23\%$. A shorter cycle improves equal-budget accuracy only if the extra averaging compensates for its larger single-cycle error. This explains why $K_{\max}=16$ gives a shorter cycle but a higher RMSE than $K_{\max}=8$.

\subsection{Scheduling and Accuracy when $q>1$}
\label{subsec:simulation_qgt1}

We consider $q\in\{2,3\}$ and $K_{\max}\in\{8,16\}$. To provide $q$ observations in each direction between non-reference nodes, the SM-NR baseline should use the transmission order
\begin{equation}
 1,\underbrace{2,\ldots,2}_{q},
 \underbrace{3,\ldots,3}_{q},\ldots,
 \underbrace{N,\ldots,N}_{q},1,
 \label{eq:sim_smnr_q}
\end{equation}
where each number identifies a transmitting node. The two reference broadcasts enclose all non-reference transmissions, which leads to $Mq+2$ rounds per cycle.

In Fig.~\ref{fig:sim_scaling_qgt1}, we show that additional observations lengthen the ranging cycle, while concurrent scheduling retains a substantial reduction in occupied rounds. At $N=100$, the serial cycles last $200$~ms and $299$~ms for $q=2$ and $3$, respectively. For $K_{\max}=8$ and $16$, the constructed CC-SM-NR cycles last $40$~ms and $27$~ms when $q=2$, and $52$~ms and $33$~ms when $q=3$. Concurrent transmissions therefore accommodate the increased observation requirement within shorter cycles.

\begin{figure}[!tbp]
	\setlength{\abovecaptionskip}{0pt}
	\centering\includegraphics[width=\columnwidth,trim=0 9.75bp 0 0,clip]{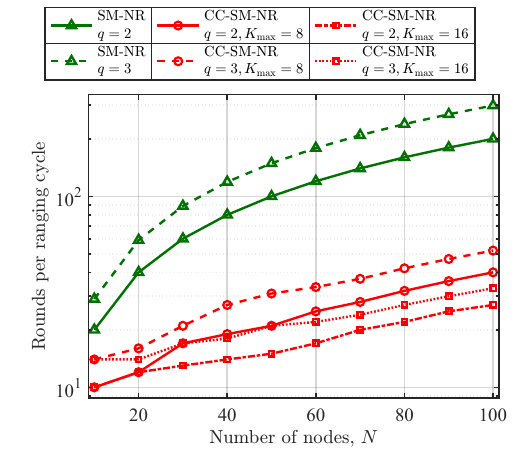}
	\caption{Rounds per ranging cycle for different schemes when $q\in\{2,3\}$.}
	\label{fig:sim_scaling_qgt1}
\end{figure}

\begin{figure}[!tbp]
	\setlength{\abovecaptionskip}{0pt}
	\centering\includegraphics[width=\columnwidth,trim=0 9.25bp 0 0,clip]{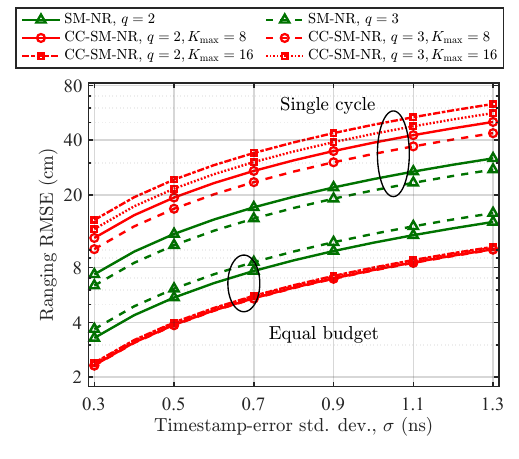}
	\caption{Ranging RMSE as a function of timestamp-error standard deviation $\sigma$ for $q\in\{2,3\}$.}
	\label{fig:sim_qgt1}
\end{figure}

Fig.~\ref{fig:sim_qgt1} shows that higher redundancy improves single-cycle accuracy by providing more observations to combine. At $\sigma=1$~ns, the single-cycle RMSE of SM-NR decreases from $24.4$~cm to $21.2$~cm as $q$ increases from 2 to 3. Under the common 1000-ms budget, however, these longer serial cycles permit only 5 and 3 complete measurements. The CC-SM-NR completes $25$ and $37$ cycles for $q=2$, and $19$ and $30$ cycles for $q=3$, at $K_{\max}=8$ and $16$, respectively. For $q=2$, the equal-budget ranging RMSEs are $7.7$~cm and $8.0$~cm for CC-SM-NR, compared with $10.9$~cm for SM-NR. For $q=3$, they are $7.7$~cm and $7.9$~cm, compared with $12.2$~cm. These correspond to reductions of approximately $29\%$ and $27\%$ for $q=2$, and $37\%$ and $35\%$ for $q=3$. Observations within one cycle share the same reference timestamps, whereas each new cycle provides new reference measurements. Therefore, increasing $q$ and repeating more cycles do not necessarily give the same accuracy within a fixed time budget.

\subsection{Concurrency Limit, Redundancy, and Network Size}
\label{subsec:simulation_tradeoffs}

\begin{figure}[!tbp]
	\setlength{\abovecaptionskip}{0pt}
	\centering\includegraphics[width=\columnwidth,trim=0 6.25bp 0 0,clip]{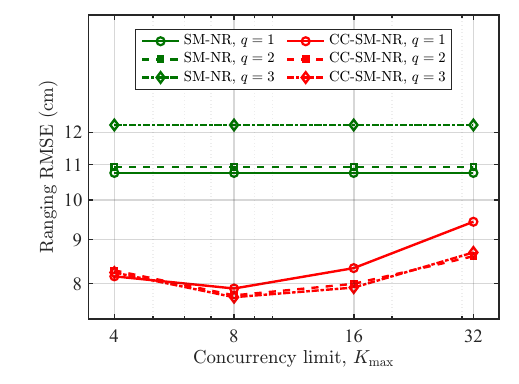}
	\caption{Effect of redundancy and the concurrency limit under an equal time budget $T_{\rm budget}$.}
	\label{fig:sim_qk}
\end{figure}

\begin{figure}[t]
	\setlength{\abovecaptionskip}{0pt}
	\centering\includegraphics[width=\columnwidth,trim=0 9.75bp 0 0,clip]{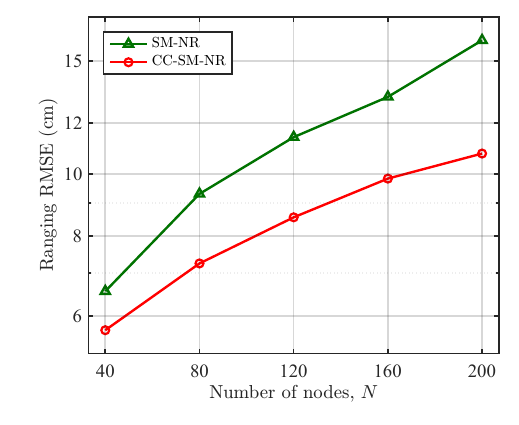}
	\caption{Equal-budget ranging RMSE as a function of $N$ for $q=1$.}
	\label{fig:sim_nodes}
\end{figure}

\begin{figure}[!tbp]
	\setlength{\abovecaptionskip}{0pt}
	\centering\includegraphics[width=\columnwidth,trim=0 5.75bp 0 0,clip]{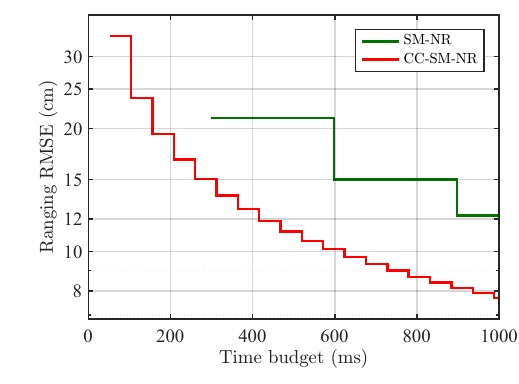}
	\caption{Ranging RMSE versus time budget $T_{\rm budget}$ for $q=3$.}
	\label{fig:sim_budget}
\end{figure}

Fig.~\ref{fig:sim_qk} compares the ranging RMSE for different $q$ and $K_{\max}$ under a common 1000-ms budget. One can observe that the CC-SM-NR achieves a lower RMSE than the baseline in every case, and its RMSE first decreases as $K_{\max}$ increases from 4 to 8 and then increases at larger concurrency limits. In particular, $K_{\max}=8$ gives the lowest RMSEs of $7.9$~cm, $7.7$~cm, and $7.7$~cm for $q=1,2,3$, respectively. It implies that, although a higher concurrency limit permits more complete cycles within the budget, the RX timestamp-error variance increases with the number of simultaneous transmitters under \eqref{eq:sim_noise}. This result reflects the balance between these two effects. Besides, at $K_{\max}=8$, increasing $q$ improves the RMSE only slightly. This is because the additional observations within a cycle are partly offset by fewer completed cycles. 

Fix the concurrency limit at $K_{\max}=8$. In Fig.~\ref{fig:sim_nodes}, we show the effect of network size for $q=1$. The ranging RMSE increases with $N$ for both schemes, while CC-SM-NR maintains a lower error over the entire tested range. Although a larger network has more node pairs, its longer ranging cycle leaves fewer measurements of each pair for averaging within the same time budget. The CC-SM-NR scheme can reduce this loss by shortening the cycle. At $N=200$, the RMSE is $10.8$~cm for CC-SM-NR, compared with $16.2$~cm for SM-NR.

In Fig.~\ref{fig:sim_budget}, we further compare the ranging RMSE as the time budget increases. The flat intervals show that extending the time budget helps only when another complete cycle is obtained. Shorter cycles therefore provide more frequent opportunities to improve the estimate. The CC-SM-NR scheme provides its first complete network measurement after 52~ms, whereas the serial SM-NR baseline requires 299~ms. Although the first CC-SM-NR cycle has a larger ranging error, its shorter duration permits more frequent averaging updates. Within 1~s, CC-SM-NR completes 19 cycles and achieves an RMSE of $7.7$~cm, whereas SM-NR completes three cycles and achieves $12.2$~cm. The $37\%$ reduction demonstrates the accuracy benefit of acquiring more measurements within the same time budget.

\subsection{Relative Localization}
\label{subsec:relative_localization}

\begin{figure}[!t]
	\setlength{\abovecaptionskip}{0pt}
	\centering\includegraphics[width=\columnwidth,trim=0 5.75bp 0 0,clip]{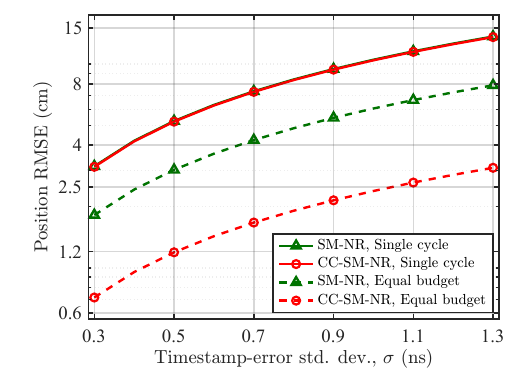}
	\caption{Position RMSE as a function of timestamp-error standard deviation $\sigma$ for $q=3$.}
	\label{fig:sim_loc}
\end{figure}
To evaluate the localization accuracy of different schemes, following \cite{zhang2022smnr}, we adopt a multidimensional scaling (MDS) initialization, as well as a weighted scaling by majorizing a complicated function (SMACOF) method, to estimate the relative locations of multiple nodes.  In Fig.~\ref{fig:sim_loc}, the relative-localization accuracy of different schemes is compared by fixing $q=3$ and $K_{\max}=8$. As shown in Fig.~\ref{fig:sim_loc}, the position RMSE increases with the timestamp-error standard deviation $\sigma$ for both schemes, and their single-cycle curves nearly coincide. Under the common 1000-ms budget, however, the corresponding RMSEs decrease to $6.1$ and $2.4$~cm at $\sigma=1$~ns. The larger number of complete ranging cycles allows CC-SM-NR to supply more accurate distance estimates, obtained by averaging across cycles, to the localization algorithm. The similar single-cycle position RMSEs, despite different ranging RMSEs, show that ranging RMSE alone does not determine localization accuracy. The localization algorithm uses all pairwise distances, so network geometry and correlations between ranging errors also affect the position estimates.

\begin{figure}[!t]
	\setlength{\abovecaptionskip}{0pt}
	\centering\includegraphics[width=0.9\columnwidth]{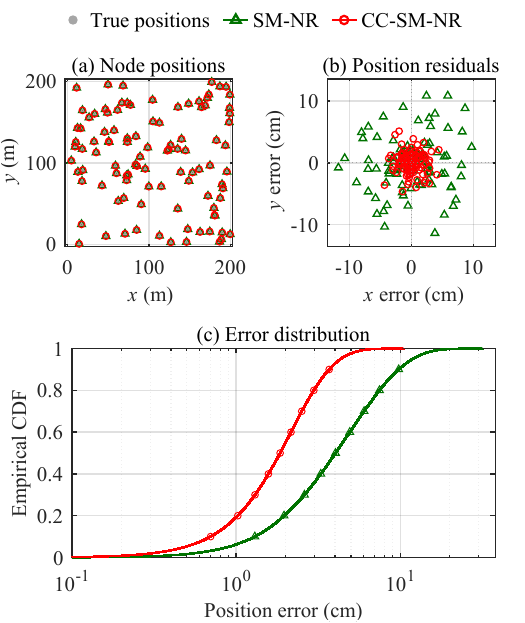}
	\caption{Equal-budget localization for $q=3$: (a) the true and estimated node positions, (b) the position errors of all nodes, and (c) the cumulative distribution function (CDF) of position errors.}
	\label{fig:sim_topology}
\end{figure}

In Fig.~\ref{fig:sim_topology}, the equal-budget localization results are shown. In Fig.~\ref{fig:sim_topology}(a), we find that both sets of estimated positions are close to the true positions at the scale of the deployment region. The position errors in Fig.~\ref{fig:sim_topology}(b) reveal the difference more clearly. The proposed CC-SM-NR scheme produces a denser distribution around the origin than the SM-NR scheme. In Fig.~\ref{fig:sim_topology}(c), the CDF of position errors is also shifted toward smaller errors for CC-SM-NR. In particular, the 90th-percentile position error decreases from $9.8$~cm for SM-NR to $3.7$~cm for CC-SM-NR, complementing the RMSE reduction shown in Fig.~\ref{fig:sim_loc}. The CDF shows that, in most cases, the CC-SM-NR can reduce the position errors compared to SM-NR.


\section{Conclusions}\label{sec:con}
This paper established the CC-SM-NR framework for asynchronous half-duplex networks. The CC-SM-NR protocol combines the transmit--listen rounds with the clock-scale compensation, enabling reciprocal range measurements under a finite concurrency limit. Our analysis determines the exact minimum number of transmit--listen rounds and its minimum within the constant-weight codeword class under a concurrency limit. The scheduling algorithm combines parameter search with load-balancing exchanges to achieve the minimum number of transmit--listen rounds. The greedy extension of this algorithm provides feasible schedules for higher observation redundancy. The simulations demonstrate that, for a given time budget, the shorter ranging cycle of CC-SM-NR enables it to achieve higher accuracy compared with the SM-NR scheme.

For future work, one may consider heterogeneous networks where selected full-duplex nodes could reduce ranging rounds, subject to residual self-interference. Combining network ranging with integrated sensing and communication (ISAC) could support environmental sensing through shared waveforms. Multi-antenna nodes could provide angle measurements to complement ranges and improve estimates of network geometry. Novel antenna arrays, such as reconfigurable intelligent surfaces (RISs) and movable antennas, also motivate joint optimization of surface configurations, antenna positions, and ranging schedules. These extensions should be assessed under common time budgets, including the overhead of hardware control and antenna reconfiguration.

\appendices
\section{Proof of Lemma~\ref{lem:almost_regular_cw}}
\label{app:proof_almost_regular_cw}

\begin{IEEEproof}
	For a given collection $\mathcal G$, let
	\begin{equation}
		K_r(\mathcal G)
		\triangleq
		\left|\left\{{\cal A}\in\mathcal G:r\in {\cal A}\right\}\right|
	\end{equation}
	be the number of selected codewords containing coordinate $r$. Since
	$M\leq\binom{R}{w}$, at least one such collection exists. One can find one $\mathcal G$ that minimizes the imbalance potential function
	\begin{equation}
		\Psi(\mathcal G)
		\triangleq
		\sum_{r=1}^{R}K_r^2(\mathcal G).
		\label{eq:load_potential}
	\end{equation}
	For contradiction, suppose that there are two rounds $a$ and $b$, which satisfy
	\begin{equation}
		K_a\geq K_b+2.
		\label{eq:unbalanced_pair}
	\end{equation}
	Then, we define the two codeword subfamilies
	\begin{align}
		\mathcal G_{a\bar b}
		&\triangleq
		\left\{{\cal A}\in\mathcal G:a\in {\cal A},\ b\notin {\cal A}\right\},\\
		\mathcal G_{\bar a b}
		&\triangleq
		\left\{{\cal A}\in\mathcal G:a\notin {\cal A},\ b\in {\cal A}\right\}.
	\end{align}
	The codewords containing both round $a$ and round $b$ contribute equally to $K_a$
	and $K_b$, so we have
	\begin{equation}
		|\mathcal G_{a\bar b}|-|\mathcal G_{\bar a b}|
		=
		K_a-K_b
		>0.
		\label{eq:exchange_cardinality}
	\end{equation}
	For every ${\cal A}\in\mathcal G_{a\bar b}$, define the exchanged weight-$w$ set as
	\begin{equation}
		{\cal A}'
		\triangleq
		({\cal A}/\{a\})\cup\{b\}.
		\label{eq:exchange_set}
	\end{equation}
	If every such ${\cal A}'$ already belonged to $\mathcal G$, the mapping from
	${\cal A}$ to ${\cal A}'$ would be an injection from $\mathcal G_{a\bar b}$ into
	$\mathcal G_{\bar a b}$. This would imply
	$|\mathcal G_{a\bar b}|\leq|\mathcal G_{\bar a b}|$, contradicting the supposition
	\eqref{eq:exchange_cardinality}. Hence, there exists at least one
	${\cal A}\in\mathcal G_{a\bar b}$ for which ${\cal A}'\notin\mathcal G$. By replacing ${\cal A}$ by ${\cal A}'$, we obtain
	\begin{equation}
		\mathcal G'
		=
		(\mathcal G/\{{\cal A}\})\cup\{{\cal A}'\}.
	\end{equation}
	The new collection $\mathcal G'$ still contains $M$ distinct weight-$w$ codeword sets. Moreover,
	only the loads of rounds $a$ and $b$ are updated as follows:
	\begin{equation}
	K_a'=K_a-1 \quad \text{and} \quad  K_b'=K_b+1.
	\end{equation}
	The corresponding change in the potential function is
	\begin{align}
	\Psi(\mathcal G')-\Psi(\mathcal G)
	=&
	(K_a-1)^2+(K_b+1)^2-K_a^2-K_b^2 \notag \\
	=&
	-2(K_a-K_b-1)
	<0,
	\end{align}
	which contradicts the assumed minimality of $\Psi(\mathcal G)$. Therefore,
	no round pair satisfying \eqref{eq:unbalanced_pair} can exist, which proves
	\eqref{eq:almost_regular_load}.
	Finally, because every selected codeword has weight $w$,
	$
		\sum_{r=1}^{R}K_r=Mw
	$ holds. Combining this equality with \eqref{eq:almost_regular_load}, we obtain the
	 distribution in \eqref{eq:load_division} and the peak in
	\eqref{eq:optimal_peak_load}.
\end{IEEEproof}

\footnotesize
\bibliographystyle{IEEEtran}
\bibliography{IEEEabrv,SGroupDefinition,SGroup}
\balance

\begin{IEEEbiography}[{\includegraphics[width=1in,height=1.25in,trim=72.74091bp 102.73716bp 75.85302bp 5.99925bp,clip,keepaspectratio]{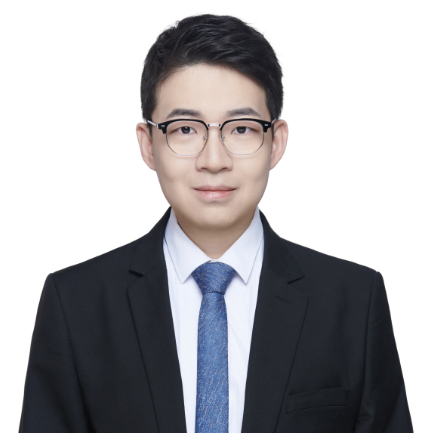}}]{Zijian Zhang}
	(Graduate Student Member, IEEE) received the B.E. degree and Ph. D. degree in electronic engineering from Tsinghua University, Beijing, China, in 2020 and 2025, respectively. His research area is network localization and beyond massive MIMO for future wireless systems. He has received Tsinghua Science Technology Best Paper Award in 2024 and the IEEE Signal Processing Society Young Author Best Paper Award in 2025. In 2024, he won the Special Scholarship of Tsinghua University, which annually awards 10 students out of over 40,000 graduate students in Tsinghua University. He was listed in Standford World's Top 2\% Scientists in 2023 and 2025.
\end{IEEEbiography}

\begin{IEEEbiography}
    [{\includegraphics[width=1in,height=1.25in,clip,keepaspectratio]{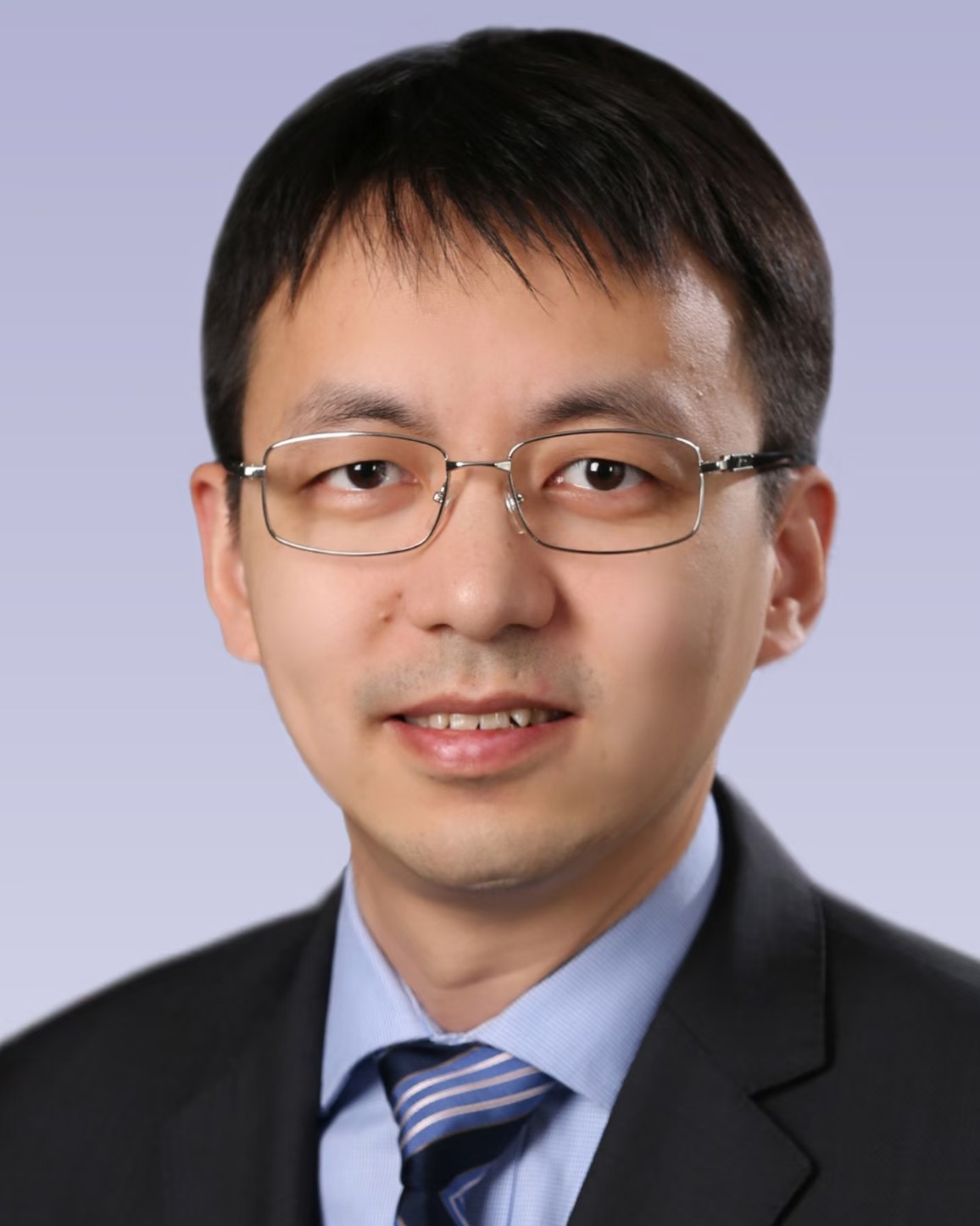}}]
{Yuan Shen} (Senior Member, IEEE) received the B.E. degree in electronic engineering from Tsinghua University in 2005, and the S.M. and Ph.D. degrees in electrical engineering and computer science from the Massachusetts Institute of Technology (MIT) in 2008 and 2014, respectively. He is currently a Full Professor with the Department of Electronic Engineering, Tsinghua University. His research interests include integrated sensing and communication, multi-agent systems and AI for science. His papers have received the IEEE ComSoc Fred W. Ellersick Prize and several best paper awards from IEEE conferences. He has served as the TPC Symposium Co-Chair for IEEE ICC and IEEE Globecom for several times. He was the Elected Chair for the IEEE ComSoc Radio Communications Committee from 2019 to 2020. He is currently an Editor of the IEEE \textsc{TRANSACTIONS ON COMMUNICATIONS}, IEEE \textsc{TRANSACTIONS ON WIRELESS COMMUNICATIONS}, and \textsc{China Communications}.
\end{IEEEbiography}

\end{document}